\documentclass[a4paper,11pt]{article}
\pdfoutput=1

\usepackage[utf8]{inputenc}
\usepackage{jheppub}
\usepackage[shortlabels]{enumitem}
\usepackage{amssymb,amsmath,amsbsy,amsthm,mathtools, physics}
\usepackage{braket}

\newtheorem{proposition}{Proposition}

\newtheorem{lemma}[proposition]{Lemma}
\newtheorem{theorem}[proposition]{Theorem}
\newtheorem{corollary}[proposition]{Corollary}
\theoremstyle{remark}

\usepackage{graphicx,color}
\usepackage{multirow}
\usepackage{booktabs}
\usepackage{hyperref}
\usepackage{subcaption}
\newcommand{\lb}{\left[}
\newcommand{\rb}{\right]}
\newcommand{\lp}{\left(}
\newcommand{\rp}{\right)}

\newcommand{\mB}{\mathcal{B}}

\newcommand{\mO}{\mathcal{O}}

\newcommand{\mbI}{\mathbb{I}}

\newcommand{\ep}{\epsilon}

\newcommand{\Renyi}{R\'{e}nyi }
\newcommand{\Renyitwo}{R\'{e}nyi-2 }

\newcommand{\Ball}[2]{\mB_{#1}\lp #2 \rp}

\title{The Capacity of Entanglement and\\ Holographic Entropies at Finite Resources}

\author[a,b]{Ning Bao}
\author[a]{Jacob March}

\affiliation[a]{Department of Physics, Northeastern University, Boston, MA, 02115, USA}
\affiliation[b]{Computational Science Initiative, Brookhaven National Laboratory, Upton, NY 11973 USA}

\emailAdd{ningbao75@gmail.com}
\emailAdd{j.march@northeastern.edu}

\abstract{
    The Ryu-Takayanagi formula equates the area of a minimal surface with the von Neumann entropy of a boundary subregion, and leaves two things about that identification open.
    The first is how sharply a geometry fixes an entropy. Fannes-Audenaert answers with a Hilbert-space dimension, which diverges as the cutoff is removed however close the two states are. We replace it with the capacity of entanglement, the variance of the modular energy, whose square root grows like the square root of the entangling area where the dimensional factor grows like the regulated volume. The bound is dimension-free and saturated, and it makes the ambiguity of the entropy subextensive for any perturbation whose capacity is small compared with $S_{vN}^2$ times the trace norm.
    The second is what the area means for a single state, since compression and dilution rates are defined only for many copies while a geometry describes one. When a single replica saddle dominates near $\alpha = 1$, every smooth \Renyi entropy at fixed $\alpha > 1$ agrees with $S_{vN}$ to $\mO(\sqrt{S_{vN}})$, as do the smooth min- and max-entropies. The minimal surface therefore fixes every one-shot entropy of the state at once, with large central charge playing the role of large copy number in the asymptotic equipartition property.
    As a consequence we bound how far outside the holographic entropy cone a holographic state can appear to fall, leaving estimation and certification open.
}

\makeatletter
\gdef\@fpheader{}
\makeatother

\begin{document}

\maketitle

\section{Introduction}
\label{sec:intro}

One of the most profound insights offered by the correspondence between anti-de Sitter space and conformal field theory (AdS/CFT) \cite{Maldacena:1997re} is the realization that the entanglement structure of a holographic state is encoded in the semiclassical geometry of its gravitational dual.
For static spacetimes, this connection is quantified by the Ryu-Takayanagi (RT) formula \cite{Ryu:2006bv, Ryu:2006ef}.
To leading order in the central charge, the von Neumann entropy
\begin{equation}
    S_{vN}(\rho) = -\Tr(\rho \log \rho)
    \label{vN_entropy}
\end{equation}
of a boundary subregion $A$ is given by
\begin{equation}
    S_{vN}(\rho_A) = \min_{\gamma \sim A} \frac{\text{Area}(\gamma)}{4 G_N},
    \label{rt}
\end{equation}
where $\gamma$ is a codimension-2 minimal surface in the bulk homologous to $A$.
This prescription has been robustly motivated by gravitational path integral arguments \cite{Lewkowycz:2013nqa, Dong:2016hjy}, extended to time-dependent settings \cite{Hubeny:2007xt, Wall:2012uf}, and, more recently, obtained for classes of multi-boundary states via CFT ensembles in the context of $\text{AdS}_3/\text{CFT}_2$ \cite{Bao:2025plr}.

The quantity on the left of \eqref{rt} is, however, an asymptotic one.
Its operational meaning, as an optimal compression rate or an optimal entanglement dilution rate, is tied to asymptotically many identical copies of $\rho_A$, a resource that is never available in a physical or numerical experiment.
What an experiment can measure are the integer \Renyi moments $\Tr \rho^\alpha$, accessible through SWAP-type protocols and randomized measurements \cite{Ekert:2002qtj, Hastings:2010zka, Cardy:2011zz, vanEnk:2011xlo, Humeniuk:2012xg, Abanin:2012jms, Daley:2012xhf, Islam:2015mom, Linke:2017xlv, Brydges:2019wut, Elben:2022jvo}, which determine the \Renyi entropies
\begin{equation}
    S_\alpha(\rho) := \frac{1}{1-\alpha} \log \Tr (\rho^\alpha)
    \label{renyi_def}
\end{equation}
at integer index $\alpha \ge 2$.
These are not substitutes for \eqref{vN_entropy}.
For a single interval of length $\ell$ in the vacuum of a holographic CFT$_2$ with UV cutoff $a$ \cite{Holzhey:1994we, Calabrese:2004eu},
\begin{equation}
    S_\alpha = \frac{c}{6}\lp 1 + \frac{1}{\alpha} \rp \log\frac{\ell}{a},
    \label{cardy_renyi}
\end{equation}
so the bare \Renyitwo entropy is $S_2 = \tfrac{3}{4} S_{vN}$.
The difference is extensive, of the same order of central charge as the area term itself.

Additionally, a laboratory preparation of $\rho$ is only ever known to lie within a trace-norm ball of small radius $\ep$, written $\Ball{\ep}{\rho}$, set by the preparation fidelity, since even for two known states, distinguishing them at trace distance $\ep$ with fixed confidence requires of order $1/\ep^2$ copies \cite{Helstrom1969Detection, AudenaertQuantumChernoff2007}.
Over that ball the von Neumann entropy can vary by up to $\tfrac{\ep}{2}\log (d-1) + h(\ep/2)$, where $d$ is the dimension of the regulated Hilbert space of the subregion and $h$ the binary entropy \cite{Audenaert:2006vjl}. Throughout, $\ep$ and $\delta$ denote trace norms, twice the trace distance in which \cite{Audenaert:2006vjl} is stated.
This bound is useless in the context of an infinite dimensional CFT.
The von Neumann entropy of a prepared state is therefore not a well-posed quantity at finite resources, and neither is the bare \Renyi entropy at the resolution that matters, since the moments are exponentially small and states in the ball can differ in their bare \Renyi entropies at leading order in $c$.

The first result of this paper settles how sharply a geometry fixes an entropy.
Theorem \ref{thm:continuity} is a continuity bound for the von Neumann entropy itself, in which the dimensional factor of the Fannes-Audenaert inequality \cite{Audenaert:2006vjl} is replaced by the capacities of entanglement of the two states compared, the variance of the modular energy defined in \eqref{capacity_def} below.
It is dimension-free, it vanishes as the trace norm does, and it holds at every trace norm below its maximal value of two. We exhibit pairs of states that saturate it, so no bound of the form $c_1(\delta)\lp\sqrt{C_\rho}+\sqrt{C_\sigma}\rp + c_2(\delta)$ has a smaller $c_1$ or $c_2$.
The proof of theorem \ref{thm:continuity} uses no one-shot machinery at all, only Mirsky's inequality \cite{Mirsky1960SymmetricGauge}, an exact identity for the entropy change at each eigenvalue, and one application each of Jensen's inequality and Cauchy-Schwarz.
It matters in a continuum theory because $\sqrt{C}$ grows like the square root of an area where $\log d$ grows like a volume, quantified in section \ref{sec:continuity}.
It turns the $\mO(\ep \log d)$ ambiguity of section \ref{sec:ball} into $\mO(\sqrt{\ep\, C_\sigma})$ against any particular perturbation $\sigma$, which is subextensive whenever $C_\sigma$ is small compared with $S_{vN}^2/\ep$.

Our second result concerns the quantities an experiment actually reports, where the same modular energy governs the answer.
The replacements that are well posed over the ball are the smooth entropies of one-shot information theory \cite{Renner2004SmoothRE, Renner:2005qbj}, the extremal values a given entropy takes over the ball.
The smooth \Renyi entropy at $\alpha > 1$ is
\begin{equation}
    S^\ep_\alpha(\rho) := \max_{\sigma \in \Ball{\ep}{\rho}} S_\alpha(\sigma),
    \label{smooth_entropy_def}
\end{equation}
by definition the largest value the bare $S_\alpha$ can take on any state $\ep$-indistinguishable from $\rho$.
For holographic states, at each fixed $\alpha > 1$,
\begin{equation}
    S^\ep_\alpha(\rho) = S_{vN}(\rho) + \mO\lp\sqrt{S_{vN}(\rho)}\rp
    \label{main_result}
\end{equation}
at fixed smoothing parameter $\ep \in (0,1)$.
The mechanism is concentration of the modular energy $E = -\log\lambda$, where $\lambda$ is an eigenvalue of $\rho$ drawn with probability $\lambda$.
$E$ has mean $S_{vN}$ and variance the same capacity $C$, which for a holographic state is $\mO(S_{vN})$.
The relative width of the modular spectrum therefore vanishes as $1/\sqrt{S_{vN}}$, and the large central charge limit plays the role the copy number plays in the quantum asymptotic equipartition property \cite{Renner:2005qbj, Tomamichel:2009ohu}.
The statement holds at every fixed index, theorem \ref{thm:ceiling}, whose proof uses only those two moments through a modular window and one application of H\"older's inequality for Schatten norms, and corollary \ref{cor:flatholo} specializes it to a holographic subregion.
It is a single-shot statement about one copy of one state, with explicit finite constants, where the counterpart in \cite{NuradhaWilde2023Fidelity} is asymptotic.


The holographic entropy cone \cite{Bao:2015bfa} is the polyhedral cone of subsystem entropy vectors realizable by the RT formula, a proper subcone of the quantum entropy cone for three or more parties that certifies certain entanglement patterns, GHZ-type correlations among them, to be incompatible with a semiclassical bulk.
The inequalities of the cone constrain the von Neumann entropies at leading order in $1/G_N$ and no better, and no substitute cone is available off the shelf at fixed index, since for $0 < \alpha < 1$ the \Renyi entropies of general quantum states satisfy no nontrivial linear inequalities beyond nonnegativity, and for $\alpha > 1$ none that are homogeneous \cite{Linden:2012kdb}.
Those statements concern all quantum states rather than the holographic subclass, so they do not by themselves exclude a fixed-index cone for holographic states, and we are not aware of one.
A test of this kind must therefore land on $S_{vN}$ itself at $\mO(c)$ resolution, which is the accuracy at which \eqref{main_result} operates.
Whether such a test can be assembled from measured moments is left open here, for reasons recorded in section \ref{sec:discussion}.

Proposition \ref{prop:subgauss} with corollary \ref{cor-window} is new.
That the smooth entropies of holographic states flatten onto $S_{vN}$ was established for the min- and max-entropies, the $\alpha \to \infty$ and $\alpha \to 0$ endpoints of the \Renyi family \eqref{renyi_def}, for single intervals in two dimensions in \cite{Czech:2014tva}, and for arbitrary regions in arbitrary dimension in unpublished work of Hayden, Swingle and Walter \cite{Hayden:unpublished} reported in \cite{Bao:2018pvs}, whose projected-state construction underlies our window lemma.
That construction carries a window of half-width of order $\sqrt{\log(1/\ep)}$ in units of $1/G_N$, obtained there by saddle point. The explicit constant \eqref{gaussian_window} is a sharpening of it and is proved here.
It is proved here, in arbitrary dimension and with the range of smoothing parameters over which it holds made explicit, from the leading-order form \eqref{renyi_leading_order} together with three times differentiability of $s_\alpha$ on a fixed neighborhood of $\alpha = 1$.
For a single interval in two dimensions the same statement follows from the exact modular spectrum \cite{Czech:2014tva}, which no other region supplies.
Theorem \ref{thm:continuity} is new as well, and concerns the von Neumann entropy directly rather than any smooth entropy.
Dimension-free continuity bounds that involve a moment of the state are not new in themselves.
The closest precedent is classical. Theorem 1 of \cite{Cohen2021DimensionFree}, presented there as a dimension-free analog of the Cover-Thomas continuity bound, controls $\abs{H(\mu)-H(\nu)}$ by the total variation distance together with the uncentered information moments $\sum_i \xi_i \abs{\log \xi_i}^\alpha$ of the two distributions. At $\alpha = 2$ that moment is $\expval{E^2} = C + S_{vN}^2$, so on a holographic spectrum its square root is $S_{vN}$ rather than $\sqrt{C}$ and the resulting bound is extensive. Centering the moment keeps \eqref{capacity_continuity} subextensive, which is the difference that matters here. Their bound also carries a dependence on $\norm{\mu-\nu}_\infty$, where \eqref{capacity_continuity} uses the trace norm alone, and it is classical where \eqref{capacity_continuity} is a statement about density matrices with matching saturating families.
Winter's energy-constrained continuity \cite{Winter:2015qhs} is the other precedent, and there the difference lies in how the moment enters. A constraint picks out a class of states, where theorem \ref{thm:continuity} applies to any pair and carries the capacity of each of the two inside the bound.
A capacity constraint would in any case lie outside the Alicki-Fannes-Winter method those bounds use, which mixes the two states and so needs a convex constraint set. Bounded capacity is not convex, since two flat states of vanishing capacity can mix to one of positive capacity.
A separate recent line improves the dimensional factor of Fannes-Audenaert rather than removing it \cite{Audenaert2025ContinuityBounds, Berta2025IntegralRepresentations}, and section \ref{sec:continuity} compares those bounds with theorem \ref{thm:continuity}.
Theorem \ref{thm:ceiling} assembles two ingredients that are known separately.
The bounding step, that a state placing weight $1-\ep$ on a subspace of dimension $r$ has $S_\alpha \le \log r + \tfrac{\alpha}{\alpha-1}\log\tfrac{1}{1-\ep}$, is a specialization of a known inequality of \cite{KhatriWilde2024Principles}, set out in section \ref{sec:flatness}.
The rank bound, that such a subspace exists with $\log r$ at most $S_{vN}$ plus a half-width of order $\sqrt{C}$, is in substance result 2 of \cite{Boes:2020vpv}, with the same window, though their route is through Lorenz curves rather than the one-sided Chebyshev bound used here. The projected-state construction it rests on is that of \cite{Hayden:unpublished, Bao:2018pvs}.
The assembly is what is new.
The inequalities of \cite{Boes:2020vpv} cannot simply be chained, their optimization over the ball running in the opposite direction to ours, as recorded around \eqref{bnw}. What is reused is the projector their proof constructs, not their conclusion, and the work is in showing that one projector, fixed before any perturbation is chosen, suffices for the whole ball.

The collapse at fixed index is known outside holography in the i.i.d. setting.
Corollary 4 of \cite{NuradhaWilde2023Fidelity} expands the smooth sandwiched \Renyi relative entropy of $n$ copies to second order in $1/\sqrt n$, at every $\alpha > 1$ and against an arbitrary reference operator. Taken against the identity, where that relative entropy is $-S_{vN}$ and its variance is the capacity, their expansion reads
\begin{equation}
    S^\ep_\alpha(\rho^{\otimes n}) \;=\; n S_{vN} + \sqrt{nC}\,\Phi^{-1}(\ep) + \mO(\log n) ,
\end{equation}
with $\Phi$ the standard normal distribution function. They observe that at this order every index $\alpha > 1$ gives the same answer, agreeing with the $\alpha \to \infty$ member of the family. That is \eqref{sharp_constant} below in the i.i.d. setting and with the sharp constant.
Their smoothing is over a fidelity ball of subnormalized states rather than the normalized trace-norm ball \eqref{ball_def}, so the constants require the translation footnote \ref{fn:conventions} warns about, but the $\sqrt{C}$ scaling and the independence of the index are the same.
On the entropy side rather than the relative-entropy side, \cite{SakaiTan2020Smooth} derive the corresponding classical expansion, restricting throughout to $0 < \alpha < 1$.
What theorem \ref{thm:ceiling} supplies that these do not is a single-shot statement, valid for one copy of one arbitrary state with explicit finite constants.
That is the only version a holographic subregion can use, since there is no tensor power available and the concentration has to come from large central charge at a single copy.

One-shot quantities have been studied in holography in a different direction by \cite{Akers:2023fqr}, which constructs covariant max- and min-entanglement wedges and a one-shot generalized second law.
Those results and ours share a motivation but no technical content.
Closest in spirit is \cite{Wang:2021ptw}, which starts from the same tension between single-copy reconstruction and an asymptotic entropy but resolves it with replica copies in fixed-area states rather than with the large central charge limit at a single copy in generic ones.


Section \ref{sec:setup} collects the modular structure of a holographic subregion, the two moments that carry the argument, the modular density of states and the sub-Gaussian window it supports, proposition \ref{prop:subgauss} with corollary \ref{cor-window}, and the preparation ball.
Section \ref{sec:continuity} proves the continuity bound, which uses none of the one-shot machinery.
Section \ref{sec:flatness} introduces the smooth entropies, proves the flatness of the smooth \Renyi spectrum at every fixed index, and shows that mixture noise cannot reproduce it once the contaminant is separated from the target.
Section \ref{sec:discussion} considers implications for holographic systems and future directions.

Throughout, the Hilbert space of the subregion is finite dimensional, as it is after any lattice or short-distance regulation.
This is used whenever a rank is counted or a maximum over the ball is attained.

\section{Modular Structure and the Preparation Ball}
\label{sec:setup}

\subsection{Modular energy and its two moments}
\label{sec:modular}

Let $\rho$ be the reduced state of a subregion, let $K = -\log \rho$ be its modular Hamiltonian, with eigenvalues $E_i = -\log \lambda_i$, and let
\begin{equation}
    Z(\alpha) = \Tr \rho^\alpha = e^{(1-\alpha) S_\alpha}
    \label{partition_function}
\end{equation}
denote the \Renyi partition function.
For a holographic state the \Renyi entropies take the leading-order form
\begin{equation}
    S_\alpha = \frac{s_\alpha}{G_N} + \mO(1),
    \label{renyi_leading_order}
\end{equation}
where $s_\alpha$ is independent of $G_N$, depends smoothly and nontrivially on $\alpha$, and is computed by the areas of backreacting cosmic branes \cite{Headrick:2010zt, Dong:2016fnf}.
Several statements below differentiate the remainder rather than merely bound it, and that requires more than \eqref{renyi_leading_order} supplies, since a remainder of order one pointwise in $\alpha$ need not have controlled derivatives.
We assume \textit{local smoothness}, that on a fixed neighborhood $\abs{\alpha - 1} \le t_0$ the remainder in \eqref{renyi_leading_order} is three times differentiable with first three derivatives of order one, uniformly in $G_N$. Together with the smoothness of $s_\alpha$ this makes the cumulant generating function \eqref{cgf} three times differentiable there, which is the form the assumption is used in below.
This is used at \eqref{capacity_def}, where the capacity is read off by differentiating twice, in section \ref{sec:dos}, in proposition \ref{prop:subgauss}, and through those in corollary \ref{cor:flatholo}. It is a statement about the subleading expansion and is assumed rather than derived. Nothing below needs more than three derivatives.
This is the sole holographic input in the paper. It carries the sharpened window of section \ref{sec:dos} and the holographic readings of theorems \ref{thm:continuity} and \ref{thm:ceiling}, entering through the two moments below together with local smoothness. Both theorems themselves hold for arbitrary states.

Treat the modular energy as the random variable obtained by drawing an eigenvector of $\rho$ with probability equal to its eigenvalue $\lambda_i$ and recording $E_i$.
Its mean is the von Neumann entropy,
\begin{equation}
    \expval{E} = -\sum_i \lambda_i \log \lambda_i = S_{vN},
    \label{modular_mean}
\end{equation}
and its variance is the capacity of entanglement,
\begin{equation}
    C := \expval{E^2} - \expval{E}^2 = \partial_\alpha^2 \log Z(\alpha) \big|_{\alpha=1} = \mO(1/G_N),
    \label{capacity_def}
\end{equation}
where the scaling follows from \eqref{renyi_leading_order} with local smoothness, since $\log Z(\alpha) = (1-\alpha)s_\alpha/G_N$ with $s_\alpha$ smooth and the remainder contributes at order one after two derivatives.
Thus $S_{vN}$ is of order $1/G_N$, and $C = \mO(1/G_N)$ with $C \gtrsim 1/G_N$ generically as well.
On special states the capacity can vanish outright or be parametrically smaller, a flat spectrum having $C = 0$ while $S_{vN} = \log d$.
The ceilings of section \ref{sec:flatness} use only the upper bound. Proposition \ref{prop:subgauss} also needs the lower one, so that the $\mO(\sqrt{C})$ half-width dominates a displacement independent of $G_N$.
The capacity is one of a family of measures of how far a spectrum departs from flatness, whose relations to one another are set out in \cite{Jasser2026Antiflatness}.
The relative width $\sqrt{C}/S_{vN} = \mO(1/\sqrt{S_{vN}})$ vanishes at large central charge. That concentration is what the flatness results of section \ref{sec:flatness} rest on. Theorem \ref{thm:continuity} needs none of it, holding for arbitrary states, and uses the capacity only as it appears in the bound.
Every bound below is assembled from the elementary inequalities collected here, together with elementary properties of $f(x) = -x\log x$ recorded where they are used.
For a nonnegative random variable $X$ and $a > 0$, and for a real random variable of mean $\expval{X}$ and variance $V$,
\begin{align}
    \Pr(X \ge a) &\;\le\; \frac{\expval{X}}{a} ,
    \label{ineq_markov} \\
    \Pr\lp X - \expval{X} \ge W \rp &\;\le\; \frac{V}{V + W^2} ,
    \label{ineq_cantelli} \\
    \abs{\expval{XY}} &\;\le\; \sqrt{\expval{X^2}}\;\sqrt{\expval{Y^2}} ,
    \label{ineq_cs} \\
    \expval{\varphi(X)} &\;\le\; \varphi\lp\expval{X}\rp \quad \text{for concave } \varphi ,
    \label{ineq_jensen}
\end{align}
the first due to Markov, the second the one-sided Chebyshev inequality of Cantelli \cite{Cantelli1928Confini}, and the last two
Cauchy-Schwarz and Jensen for expectations, for all of which \cite{Boucheron2013Concentration} is a convenient reference.
For operators $A$ and $B$, states $\rho$ and $\sigma$, and any $P$ with $0 \le P \le 1$, writing $\lambda^\downarrow(\rho)$ for the eigenvalues of $\rho$ in decreasing order,
\begin{align}
    \abs{\Tr(AB)} &\;\le\; \norm{A}_{p} \norm{B}_{q} , \qquad \tfrac1p + \tfrac1q = 1 ,
    \label{ineq_holder} \\
    \abs{\Tr\lb P(\rho - \sigma) \rb} &\;\le\; \tfrac{1}{2}\norm{\rho-\sigma}_1 ,
    \label{ineq_proj} \\
    \norm{\lambda^\downarrow(\rho) - \lambda^\downarrow(\sigma)}_1 &\;\le\; \norm{\rho-\sigma}_1 ,
    \label{ineq_mirsky} \\
    \lambda^\downarrow_i(A) &\;\ge\; \lambda^\downarrow_i(B) \quad \text{whenever } A \ge B \ge 0 ,
    \label{ineq_weyl}
\end{align}
which are H\"{o}lder for Schatten norms at any conjugate pair $p,q \in [1,\infty]$ \cite{Bhatia1997Matrix}, the statement that trace distance controls expectation values of projectors, the Lidskii-Mirsky-Wielandt theorem \cite{Mirsky1960SymmetricGauge, Bhatia1997Matrix}, and Weyl monotonicity.

The last inequality needed differs from these in concerning a mixture of two states rather than a comparison between them. On positive operators the map $A \mapsto \Tr A^\alpha$ is convex, so
\begin{equation}
    \Tr\lp \theta A + (1-\theta)B \rp^\alpha \;\le\; \theta \Tr A^\alpha + (1-\theta) \Tr B^\alpha ,
    \label{ineq_convex}
\end{equation}
for any $\theta \in [0,1]$ and any $\alpha \ge 1$.
Throughout, the weight of a set of eigenvalues means the total probability $\sum_i \lambda_i$ it carries.
Section \ref{sec:dos} makes the resemblance to a sum of independent variables precise.

\subsection{The modular density of states at large central charge}
\label{sec:dos}

Theorems \ref{thm:continuity} and \ref{thm:ceiling} require only the two moments above, but for a holographic state the full distribution of the modular energy is available, and it sharpens the concentration estimates those bounds are stated in terms of.
The holographic input enters through the cumulants of the modular energy, and the tail bound proved below is sharper than one built from two moments alone \cite{Bao:2018pvs}.

Introduce the density of modular states $D(E) = \sum_i \delta(E - E_i)$, in terms of which the \Renyi partition function \eqref{partition_function} is a Laplace transform,
\begin{equation}
    Z(\alpha) = \Tr \rho^\alpha = \int_0^\infty dE \, D(E) \, e^{-\alpha E} = e^{(1-\alpha) S_\alpha} .
    \label{app_laplace}
\end{equation}
Given \eqref{renyi_leading_order}, the inverse transform may be evaluated by saddle point at small $G_N$, yielding a density of states \begin{equation}
    D(E) = e^{f(G_N E)/G_N + o(1/G_N)},
    \label{dos_saddle}
\end{equation}
where the $\mO(1)$ rate function $f$ is determined by $(1-\alpha)s_\alpha$ through a Legendre transform \cite{Bao:2018pvs}\footnote{\label{fn:legendre}Explicitly, writing $e = G_N E$ and $D(E) = e^{f(e)/G_N}$, the saddle of \eqref{app_laplace} reads $\sup_e \lb f(e) - \alpha e \rb = (1-\alpha)s_\alpha$, so that $f$ is recovered as $f(e) = \inf_\alpha \lb (1-\alpha)s_\alpha + \alpha e \rb$. For a vacuum interval, where $s_\alpha = \tfrac{s_1}{2}\lp 1 + 1/\alpha \rp$, the infimum sits at $\alpha = \sqrt{s_1/(2e - s_1)}$ and evaluates to $f(e) = \sqrt{s_1(2e - s_1)}$, that is $D(E) = \exp\sqrt{S_{vN}(2E - S_{vN})}$, which is the entanglement spectrum of \cite{CalabreseLefevre2008Spectrum}. Its support begins at $E = S_{vN}/2 = -\log\lambda_{\max}$, and $D(E)e^{-E}$ peaks at $E = S_{vN}$ with curvature $1/S_{vN}$ there, so the variance at the peak is $S_{vN}$ and $C = S_{vN}$ is recovered.}
Each moment $Z(\alpha)$ is then dominated by the eigenvalues at an $\alpha$-dependent modular energy, the saddle of $D(E)e^{-\alpha E}$, and at $\alpha = 1$ that saddle sits at $E = S_{vN}$, which the RT formula \eqref{rt} identifies with $A_{RT}/4G_N$.

The use of \eqref{renyi_leading_order} for what follows is that every cumulant of the modular energy scales the same way in $G_N$.
Since $\log Z(\alpha) = (1-\alpha)s_\alpha/G_N$ with $s_\alpha$ independent of $G_N$, the cumulant generating function of $E$ is $1/G_N$ times a fixed $\mO(1)$ function of the index, so the first three cumulants are $\mO(1/G_N)$ by local smoothness, and the same holds at every order for which the leading term is smooth and the remainder has controlled derivatives.
That single fact does both of the jobs the holographic input has to do.
It makes the variance $\mO(S_{vN})$, so the spread of the modular energy is subextensive, and it makes the higher cumulants negligible at the deviations that matter,
which is what sharpens the tail estimate below.
It is also the structure a sum of many independent variables has, with $1/G_N$ in place of the copy number, and that is the sense in which large central charge substitutes for many copies. What is shared is the concentration rate. What is not shared is any tensor-product structure, so no task decomposes and no per-copy rate exists, and the second-order constant is not obtained here.

Expanding the exponent of $p(E) = D(E)e^{-E}$ about its maximum gives Gaussian behavior near the peak with variance $C$, since the curvatures of Legendre-dual functions are reciprocal and $f'' = -1/(G_N C)$ there. Writing $g(E) = f(G_N E)/G_N - E$ for that exponent, $g^{(n)}(E) = G_N^{n-1}f^{(n)}$, so each further derivative costs a power of $G_N$. At deviation $W$ the cubic term is therefore smaller than the quadratic by $(g'''/g'')W = \mO(G_N W) = \mO(W/S_{vN})$, which is $\mO(S_{vN}^{-1/2})$ at $W \sim \sqrt{S_{vN}}$.
Under the assumption that the quadratic truncation controls the tails out to the relevant deviation, which requires $\log(1/\ep) = o(S_{vN})$, one obtains $\Pr\lp\abs{E - S_{vN}} > W\rp \le 2 e^{-W^2/2C}$, so the half-width that discards weight $\ep/4$ on each side is
\begin{equation}
    \widetilde{W}_\ep = \sqrt{2C\log(4/\ep)}.
    \label{gaussian_window}
\end{equation}

As stated, \eqref{gaussian_window} is a saddle-point estimate rather than a proved bound, and it is a saddle-point estimate in \cite{Bao:2018pvs} as well.
It can be proved under a weaker assumption, local smoothness of $s_\alpha$ near $\alpha = 1$ in place of global control of the tails, at the cost of an additive constant in the half-width.
The simplest proof attempt does not work.

The cumulant generating function of the modular energy is the logarithm of the \Renyi partition function \eqref{partition_function} at shifted index, since
\begin{equation}
    \psi(t) := \log \expval{e^{t E}} = \log \Tr \rho^{1-t} = \log Z(1-t) = t \, S_{1-t} ,
    \label{cgf}
\end{equation}
the trace being taken on the support of $\rho$, which is where $E$ is defined.
Its value and first two derivatives at $t = 0$, equivalently $\alpha = 1$, are $0$, $S_{vN}$ and $C$.

The textbook route to a tail bound of the form $e^{-W^2/2\Sigma}$ asks for $E$ to be sub-Gaussian with variance proxy $\Sigma$, that is for $\psi(t) \le t S_{vN} + \tfrac{t^2}{2}\Sigma$ at every real $t$.
On a regulated Hilbert space $E$ is bounded, so a finite proxy always exists, and by \eqref{cgf} the smallest one is a supremum over the \Renyi index.
Substituting $\psi(t) = t S_{1-t}$ into the sub-Gaussian condition gives $\Sigma \ge 2(S_{1-t} - S_{vN})/t$, and setting $\alpha = 1-t$ yields
\begin{equation}
    \Sigma_{\min} \;=\; \sup_{\alpha \neq 1} \; \frac{2\lp S_\alpha - S_{vN} \rp}{1 - \alpha} .
    \label{proxy_sup}
\end{equation}
Every index supplies a lower bound on it.
Taking $\alpha = 0$ gives $\Sigma_{\min} \ge 2(S_0 - S_{vN})$, and for a state obeying an area law the rank of a lattice-regulated subregion is set by its volume while $S_{vN}$ and $C$ are not, so the required proxy diverges relative to $C$ as the regulator is removed and no proxy of order $C$ exists.

Since $\psi'''(0) = 3\,\partial^2_\alpha S_\alpha |_{\alpha = 1}$, the sharp value $\Sigma = C$ fails at third order for any state whose second derivative there is nonzero, which is generic rather than holographic.
For a vacuum interval, where $S_\alpha = \tfrac12(1+1/\alpha)S_{vN}$ and $C = S_{vN}$, the proxy that index $\alpha$ demands in \eqref{proxy_sup} is $C/\alpha$, so the condition at $\Sigma = C$ fails on $0 < \alpha < 1$ and holds for $\alpha > 1$.
Since $\alpha < 1$ is $t > 0$, it is the upper tail of $E$ that is not sub-Gaussian at proxy $C$, and the lower tail is.

What rescues the estimate is that a Chernoff bound, meaning \eqref{ineq_markov} applied to $e^{tE}$ and then optimized over $t$, never evaluates $\psi$ globally.
At deviation $W$ the parameter used is $t = W/C$, which for $W = \mO(\sqrt{C})$ tends to zero at large central charge, so only a shrinking neighborhood of the origin is ever needed, and there the cumulant scaling recorded above controls everything.
Smoothness of $s_\alpha$ in the index near $\alpha = 1$ is by itself enough to force Gaussian decay out to deviations of order $\sqrt{C}$, which is the only range the window ever probes, and the cost is a correction to the half-width that stays bounded as the entropy grows.

\begin{proposition}
\label{prop:subgauss}
Set $B := \sup_{\abs{t}\le t_0}\abs{\psi'''(t)}$ for some $t_0 > 0$, with $C = \psi''(0) > 0$.
Then for every $W$ with $0 < W \le t_0 C$,
\begin{equation}
    \Pr\lp \abs{E - S_{vN}} \ge W \rp
    \;\le\; 2\exp\lb -\frac{W^2}{2C}\lp 1 - \frac{B\,W}{3 C^2} \rp \rb .
    \label{bernstein}
\end{equation}
\end{proposition}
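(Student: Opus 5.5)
The plan is a Chernoff bound on each tail separately, with the cumulant generating function \eqref{cgf} controlled only on $\abs{t} \le t_0$ by a third-order Taylor expansion about the origin.

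\textbf{Upper tail.} For $0 < t \le t_0$, Markov's inequality \eqref{ineq_markov} applied to $e^{tE}$ gives
\begin{equation*}
\Pr\lp E - S_{vN} \ge W \rp \le \exp\lb \psi(t) - t S_{vN} - tW \rb .
\end{equation*}
We have $\psi(0)=0$, $\psi'(0)=S_{vN}$ and $\psi''(0)=C$. Taylor's theorem with Lagrange remainder then gives, for $\abs{t} \le t_0$,
\begin{equation*}
\psi(t) \le t S_{vN} + \tfrac{t^2}{2}C + \tfrac{\abs{t}^3}{6}B .
\end{equation*}
The exponent is therefore at most $-tW + \tfrac{t^2}{2}C + \tfrac{t^3}{6}B$.

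We do not optimize exactly. We take the Gaussian choice $t = W/C$, which is admissible because the hypothesis $W \le t_0 C$ is exactly $t \le t_0$; this is why that condition appears in the statement. Substituting gives
\begin{equation*}
-\frac{W^2}{C} + \frac{W^2}{2C} + \frac{B W^3}{6C^3} = -\frac{W^2}{2C}\lp 1 - \frac{BW}{3C^2}\rp .
\end{equation*}
This is exactly the exponent in \eqref{bernstein}.

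\textbf{Lower tail.} The same argument applies with $t = -W/C \in [-t_0, 0)$. Markov's inequality applied to $e^{tE}$ bounds $\Pr(E - S_{vN} \le -W)$ by the same expression, because the cubic remainder is bounded by $\abs{t}^3 B/6$ irrespective of sign.

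\textbf{Conclusion.} A union bound over the two tails produces the factor $2$.

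The one point needing care is that $\psi$ is three times differentiable on $[-t_0,t_0]$, so that $B$ is finite and the Lagrange remainder applies there. This is the local smoothness assumption on the remainder in \eqref{renyi_leading_order}, combined with finite dimensionality, which makes $\psi(t)=\log\Tr\rho^{1-t}$ analytic for all real $t$ anyway. Otherwise the argument is routine.

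The only conceptual point, the obstacle flagged before the statement, is that a global sub-Gaussian proxy fails. It is avoided because the evaluation point $\abs{t}=W/C$ stays inside the neighborhood where $B$ controls the cubic term.
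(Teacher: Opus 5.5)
Your proof is correct and is essentially the paper's argument: a Chernoff bound on each tail at $\abs{t} = W/C \le t_0$, the Lagrange-remainder bound $\psi(t) \le tS_{vN} + \tfrac{t^2}{2}C + \tfrac{\abs{t}^3}{6}B$, and a union bound for the factor $2$. You work directly with $B$ and $C$, which gives the stated exponent exactly; the paper's write-up instead substitutes $B = c_3/G_N$ and $C = c_2/G_N$, a step needed only for the corollary that follows.
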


On a finite-dimensional space $\psi$ is real analytic, so $B$ is finite and the content of the bound is its size rather than the existence of the derivative.
Nothing in the statement assumes that $\psi$ takes the holographic form $\kappa(t)/G_N$ with $\kappa$ independent of $G_N$. When it does, with $\abs{\kappa'''}\le c_3$ and $\kappa''(0) = c_2$, one has $C = c_2/G_N$ and $B = c_3/G_N$, so that $B/C^2 = c_3 G_N/c_2^2$ and the exponent takes the form used in corollary \ref{cor-window}.

The tail bound is not yet a window.
What the argument needs is the half-width at which the discarded weight is $\ep$, which means inverting \eqref{bernstein}, and the only difficulty is that its exponent is a cubic in $W$ that turns over rather than growing without bound.
The condition below keeps the relevant root on the rising branch, and the conclusion is that the saddle-point estimate \eqref{gaussian_window} survives as a proved statement up to a $G_N$-independent additive constant.

\begin{corollary}\label{cor-window}
Let $L := \log(4/\ep)$ and $\phi(W) := \tfrac{W^2}{2C}\lp 1 - \tfrac{c_3 G_N W}{3c_2^2} \rp$, so that the right side of \eqref{bernstein} is $2e^{-\phi(W)}$, and set $m := \min(t_0, 2c_2/c_3)$, read as $m := t_0$ when $c_3 = 0$.
If
\begin{equation}
    \phi(mC) \;\ge\; L ,
    \label{window_exists}
\end{equation}
then the right hand side of \eqref{bernstein} equals $\ep/2$ at a smallest positive root $W_\star \le t_0 C$, the exact half-width the tail bound supplies, and
\begin{equation}
    W_\star \;=\; \widetilde{W}_\ep \;+\; \frac{c_3}{3c_2}L \;+\; \mO\lp L^{3/2} G_N^{1/2} \rp
    \label{gaussian_window_proved}
\end{equation}
as $G_N \to 0$ with $L\cdot G_N \to0$.
\end{corollary}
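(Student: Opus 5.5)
The plan is to treat $\phi$ as an explicit cubic and use only its shape. Write $\phi(W) = \tfrac{W^2}{2C} - \beta W^3$ with $\beta := \tfrac{c_3 G_N}{6 C c_2^2} = \tfrac{c_3 G_N^2}{6c_2^3}$, using $C = c_2/G_N$. Then
$$\phi'(W) = \tfrac{W}{C}\lp 1 - \tfrac{c_3 G_N W}{2c_2^2}\rp ,$$
so for $c_3>0$ the function $\phi$ increases strictly on $[0, W_{\max}]$ with $W_{\max} = 2c_2^2/(c_3 G_N) = (2c_2/c_3)\,C$. For $c_3 = 0$ it increases on all of $[0,\infty)$. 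Since $m \le 2c_2/c_3$, we have $mC \le W_{\max}$, so $\phi$ is continuous and strictly increasing on $[0, mC]$, with $\phi(0)=0$ and $\phi(mC)\ge L$ by \eqref{window_exists}.

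By the intermediate value theorem there is then a unique $W_\star \in (0, mC]$ with $\phi(W_\star) = L$, that is $2e^{-\phi(W_\star)} = \ep/2$. It is the smallest positive root because $\phi < L$ on $(0,W_\star)$. Also $W_\star \le mC \le t_0 C$, so proposition \ref{prop:subgauss} applies there and $W_\star$ is a legitimate half-width. This settles the existence half of the statement.

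For the expansion \eqref{gaussian_window_proved} I would solve $\phi(W)=L$ perturbatively around $W_0 := \widetilde W_\ep = \sqrt{2CL}$. Write the equation as
$$W = \sqrt{2CL}\,\lp 1 - \tfrac{c_3 G_N W}{3c_2^2}\rp^{-1/2}.$$
The small parameter is $\eta := \tfrac{c_3 G_N W}{3c_2^2}$. At $W \approx \sqrt{2CL} = \sqrt{2c_2L/G_N}$ this is $\eta = \mO(\sqrt{L G_N})$, which tends to zero by the hypothesis $L G_N\to0$. Expanding $(1-\eta)^{-1/2} = 1 + \eta/2 + \mO(\eta^2)$ gives
$$W_\star = W_0 + W_0\,\tfrac{c_3 G_N W_0}{6c_2^2} + W_0\,\mO(\eta^2).$$
The correction term is
$$\tfrac{c_3 G_N}{6c_2^2}\, 2CL = \tfrac{c_3}{3c_2} L ,$$
and the remainder is
$$\mO\lp \sqrt{L/G_N}\cdot L G_N\rp = \mO\lp L^{3/2}G_N^{1/2}\rp .$$

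To make the expansion rigorous rather than formal, I would first bound $W_\star$ a priori. From $\phi(W_\star)=L$ and $\phi(W)\le W^2/2C$ we get $W_\star \ge W_0$. For the upper bound, check that $\phi(2W_0) \ge L$ once $\eta(2W_0)$ is small; with monotonicity this gives $W_\star \le 2W_0$, so $\eta(W_\star) = \mO(\sqrt{LG_N})$ uniformly. Then one fixed-point iteration of the map $W\mapsto W_0(1-\eta(W))^{-1/2}$ gives the error. The map is a contraction on $[W_0,2W_0]$ with Lipschitz constant $\mO(\sqrt{LG_N})$, so the iteration error is of the stated order.

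The main obstacle is this bookkeeping step. The a priori localization $W_\star\le 2W_0$ needs $2W_0\le mC$, i.e. $\sqrt{8L G_N/c_2}\le m$, which holds asymptotically since $LG_N\to0$. The case $c_3=0$ is trivial, with $W_\star=W_0$ exactly.
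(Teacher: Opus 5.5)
Your proposal is correct and follows the paper's proof. Both locate $W_\star$ as the unique root of $\phi(W)=L$ on $[0,mC]$, using that the cubic $\phi$ increases up to its maximum at $2c_2C/c_3$. Both then solve $\phi(W)=L$ perturbatively about $\sqrt{2CL}$ in a small parameter of order $\sqrt{L G_N}$; the paper writes this as the series solution of $u^2(1-\eta u)=1$ with $u=W/\sqrt{2CL}$. Your a priori bound $W_0\le W_\star\le 2W_0$ and the contraction step make rigorous the expansion the paper states only formally, and give the same $\frac{c_3}{3c_2}L$ shift and $\mO(L^{3/2}G_N^{1/2})$ remainder.
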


The left side of \eqref{window_exists} is at least $m^2 C/6$ and therefore grows like $1/G_N$, so the condition holds automatically once $L\, G_N$ is small enough.
The condition that $L\cdot G_N \to 0$ allows $\ep$ to shrink with $G_N$ provided $\log(1/\ep)$ stays small compared to $1/G_N$ which is the range used in section~\ref{sec:flatness}.

\begin{proof}
Write $\bar{E} = S_{vN}$ and take $0 < t \le t_0$.
Inequality \eqref{ineq_markov} applied to $e^{t E}$ gives
\begin{equation}
    \Pr(E \ge \bar{E} + W) \;\le\; \exp\lb -t(\bar{E}+W) + \psi(t) \rb .
\end{equation}
Taylor expansion about the origin with Lagrange remainder, using $\psi(0) = 0$, $\psi'(0) = \bar{E}$, $\psi''(0) = C$ and $\abs{\psi'''} \le c_3/G_N$ on $\abs{t} \le t_0$, gives
\begin{equation}
    \psi(t) \;\le\; t\bar{E} + \frac{t^2}{2}C + \frac{c_3\abs{t}^3}{6G_N} ,
\end{equation}
and hence
\begin{equation}
    -t(\bar{E}+W) + \psi(t) \;\le\; -t W + \frac{t^2}{2}C + \frac{c_3 \abs{t}^3}{6 G_N} .
    \label{chernoff_exponent}
\end{equation}
For the lower tail put $t = -\nu$ with $0 < \nu \le t_0$, so that
\begin{equation}
    \Pr(E \le \bar{E} - W) \;\le\; \exp\lb \nu(\bar{E} - W) + \psi(-\nu) \rb ,
\end{equation}
and the same expansion returns the right side of \eqref{chernoff_exponent} with $\nu$ in place of $t$.
The absolute value on the remainder makes the two tails agree, since $\kappa'''$ is bounded but not signed.
Choosing $\abs{t} = W/C$, which lies in $(0,t_0]$ by assumption, and using $C = c_2/G_N$, which by local smoothness holds up to an additive term of order one and therefore costs \eqref{gaussian_window_proved} a relative correction of order $G_N$, turns the right side of \eqref{chernoff_exponent} into
\begin{equation}
    -\frac{W^2}{2C}\lp 1 - \frac{c_3 G_N W}{3c_2^2} \rp
\end{equation}
for each tail, and combining gives \eqref{bernstein}.
For \eqref{gaussian_window_proved}, write $\phi$ for minus the exponent in \eqref{bernstein},
\begin{equation}
    \phi(W) \;=\; \frac{W^2}{2C} - \frac{c_3 G_N W^3}{6 C c_2^2} ,
\end{equation}
so that the condition defining $W_\star$ is $\phi(W) = L$.
As a cubic in $W$, $\phi$ vanishes at the origin, rises to a maximum at $W = 2c_2 C/c_3$, and falls thereafter, so $\phi \ge L$ holds only between two roots and $W_\star$ is the smaller of them.
Restricting to $W \le m C$, where $\phi$ is nondecreasing, makes \eqref{window_exists} necessary and sufficient for a root to exist, and a root that exists then lies at or below $mC \le t_0 C$, which is where \eqref{bernstein} has been proved.
Only its value remains.
Setting $\eta = c_3\sqrt{2c_2 L}\,G_N^{1/2}/3c_2^2$ and $u = W/\sqrt{2CL}$ turns $\phi(W) = L$ into $u^2(1 - \eta u) = 1$, and expanding about $u = 1$ gives $u = 1 + \tfrac{\eta}{2} + \tfrac58 \eta^2 + \ldots$, so the displacement at first order is $\tfrac{\eta}{2}\sqrt{2CL} = c_3 L/3c_2$ and the next term has absolute size $\tfrac58 \sqrt{2CL}\,\eta^2$.
The expansion is valid for $\eta$ bounded away from $2/3\sqrt{3}$, the value at which the two roots merge because $\max_u \lp u^2 - \eta u^3 \rp = 4/27\eta^2$ equals $1$ there.
\end{proof}

The assumption needed is local smoothness of the cumulant generating function at the identity index, in place of the global control that \eqref{proxy_sup} shows is unavailable.
The leading-order form \eqref{renyi_leading_order} is what motivates it and supplies $\kappa(t) = t\, s_{1-t}$, with $t_0$, $c_2$ and $c_3$ independent of $G_N$ because $s_\alpha$ is, and it is used only on $\alpha \in [1-t_0, 1+t_0]$ rather than at every index.
Since \eqref{renyi_leading_order} is itself a leading-order statement, what proposition \ref{prop:subgauss} needs of its remainder is that $C \gtrsim 1/G_N$ and $B = \mO(1/G_N)$ hold uniformly for $\abs{t} \le t_0$.
That is an additional assumption on the subleading expansion, since a remainder that is $\mO(1)$ pointwise in $\alpha$ need not have controlled derivatives, and it is stated here rather than derived.

The correction is the additive $\tfrac{c_3}{3c_2}L$, which is independent of $G_N$ and therefore subleading to the $\mO(\sqrt{C})$ half-width, so \eqref{gaussian_window} survives with its leading-order behavior intact.\footnote{Because $c_3$ is a supremum over $\abs{t} \le t_0$ it depends on $t_0$, which does not appear in \eqref{gaussian_window_proved} itself but does control its domain through \eqref{window_exists}, so the two cannot be chosen independently. Shrinking $t_0$ sends $c_3$ to $\abs{\kappa'''(0)}$ and the displacement to its sharpest value $\abs{\partial^2_\alpha s_\alpha|_{\alpha=1}}L/c_2$, at the cost of tightening \eqref{window_exists}, and the two are compatible in the limit because $W_\star/C = \mO(G_N^{1/2})$ vanishes while $t_0$ may be held fixed. For a vacuum interval $\kappa'''(t) = 3 s_1 (1-t)^{-4}$, so that sharpest displacement is $L$ itself.}
Finally \eqref{gaussian_window_proved} bounds the true half-width from above rather than computing it, since the Chernoff step discards a prefactor.

A rigorous route to the density of states itself exists in two dimensions through Tauberian theorems \cite{Mukhametzhanov:2019pzy}, which replace the saddle point by two-sided bounds on integrals of $D(E)$ over energy windows.
That route uses modular invariance and returns a density smeared over a window whose width it also controls.
The argument above uses neither, since the Chernoff parameter $t = W/C$ vanishes at small $G_N$ and only $s_\alpha$ near $\alpha = 1$ is ever evaluated.

\subsection{The preparation ball}
\label{sec:ball}

Sections \ref{sec:modular} and \ref{sec:dos} describe the state one intends to prepare.
What an experiment holds is known only to lie near it, and that imprecision is the second ingredient.
Write
\begin{equation}
    \Ball{\ep}{\rho} = \{\sigma \ge 0 \, , \; \Tr \sigma = 1 \, , \; \norm{\rho-\sigma}_1 \le \ep\}
    \label{ball_def}
\end{equation}
for the trace-norm ball of radius $\ep$, within which a laboratory preparation of $\rho$ is confined by its fidelity.
As noted in the introduction, the von Neumann entropy is not a well-posed target over this ball, varying by up to $\tfrac{\ep}{2}\log(d-1) + h(\ep/2)$ \cite{Audenaert:2006vjl}.

The \Renyi moments fare better, which makes them trustworthy raw data.
Each moment is the expectation value of a bounded operator, the cyclic shift on $\alpha$ copies, so a measurement returns the moment of whatever state the apparatus actually prepares, with purely statistical error, and at no point need one infer a state from data.
As functions on state space the moments moreover move by at most a fixed multiple of the trace distance, with a multiple that does not grow with the dimension.
For any two states,
\begin{equation}
    \abs{\, \Tr\rho^\alpha - \Tr\sigma^\alpha \,}
    \;\le\; \sum_{k=0}^{\alpha-1} \abs{\, \Tr\lb \rho^k (\rho - \sigma) \sigma^{\alpha-1-k} \rb \,}
    \;\le\; \alpha \, \norm{\rho - \sigma}_1 ,
    \label{moment_lipschitz}
\end{equation}
by a telescoping expansion followed by \eqref{ineq_holder},
in contrast with the dimension-dependent continuity of $S_{vN}$.
The expansion is written for integer $\alpha$, which is the only case in which the moments are directly measured.
The mechanism behind the contrast is that $S_{vN}$ is dominated by the small-eigenvalue tail, where exponentially many individually invisible eigenvalues can hide extensive entropy, while the moments at $\alpha \ge 2$ are dominated by the largest eigenvalues, of which there are never many.
The caveat is that \eqref{moment_lipschitz} is an absolute bound on exponentially small quantities, so a perturbation of the preparation can still shift the bare \Renyi entropies substantially in relative terms, which is the failure the smooth entropies are built to control.

\Renyi entropies at $\alpha > 1$ are raised by smoothing, since it removes the large eigenvalues that drag them down, so the direction that is not yet controlled is upward. A ceiling, a bound on the largest value each measured \Renyi entropy can take on any state in the ball, turns \eqref{main_result} into an equality rather than a one-sided bound. Without one, a legitimate $\ep$-perturbed preparation of a genuine holographic state could present the apparatus with a spectrum whose \Renyi entropies exceed $S_{vN}$ at leading order. That maximum is by definition the smooth \Renyi entropy \eqref{smooth_entropy_def}, and section \ref{sec:flatness} supplies the ceilings.

\section{A Continuity Bound with the Capacity in Place of the Dimension}
\label{sec:continuity}

In section \ref{sec:ball} the von Neumann entropy was seen to wander by as much as $\tfrac{\ep}{2}\log d$ over the preparation ball.
That estimate is the Fannes-Audenaert continuity bound \cite{Audenaert:2006vjl}, and its dimensional factor is what made $S_{vN}$ ill-posed as a target.
It is far too weak for the states of interest here, since the capacity constrains how far the modular energy can spread and therefore how much entropy a perturbation can move.
Theorem \ref{thm:continuity} below is the bound that results. It is stated for an arbitrary pair of states separated by $\delta = \norm{\rho-\sigma}_1$, in the same normalization as the ball radius $\ep$ of \eqref{ball_def}, so that two preparations drawn from a ball of radius $\ep$ have $\delta \le \ep$.

A direct argument on the spectra gives a bound that vanishes with $\delta$.
Two states close in trace distance have close sorted spectra, and the entropy difference between two nearby spectra is controlled by how widely the modular energy of each is spread, which is what the capacity measures.
The dimension never enters, because at no point does the argument count eigenvalues.

\begin{theorem}
\label{thm:continuity}
Let $\rho$ and $\sigma$ be any two states on the same finite-dimensional Hilbert space, let $\delta = \norm{\rho-\sigma}_1 < 2$, which is twice the trace distance in the convention where that quantity is at most one.
Then, with $C_\rho$ and $C_\sigma$ their capacities \eqref{capacity_def},
\begin{equation}
    \abs{\, S_{vN}(\rho) - S_{vN}(\sigma) \,}
    \;\le\; \sqrt{\frac{\delta}{2-\delta}} \lp \sqrt{C_\rho} + \sqrt{C_\sigma} \rp - \log\lp 1 - \frac{\delta}{2} \rp .
    \label{capacity_continuity}
\end{equation}
\end{theorem}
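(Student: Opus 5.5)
The plan is to work entirely with sorted spectra. Let $p_i=\lambda^\downarrow_i(\rho)$ and $q_i=\lambda^\downarrow_i(\sigma)$, and write $E^\rho_i=-\log p_i$ and $E^\sigma_i=-\log q_i$. By \eqref{ineq_mirsky}, $\sum_i\abs{p_i-q_i}\le\delta$. Since both vectors sum to one, the positive and negative parts of $q-p$ each carry weight $\eta:=\tfrac12\sum_i\abs{p_i-q_i}\le\delta/2<1$. Let $A_+=\{i:q_i>p_i\}$ and $A_-=\{i:q_i<p_i\}$. Then $q(A_+)\ge\eta$, $p(A_-)\ge\eta$, and the overlap $\sum_i\min(p_i,q_i)$ equals $1-\eta$. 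Both the target quantity and the right side of \eqref{capacity_continuity} are symmetric in $\rho\leftrightarrow\sigma$, and the right side is increasing in $\delta$. So it suffices to bound $S_{vN}(\sigma)-S_{vN}(\rho)$ from above with $\eta$ in place of $\delta/2$.

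The first step is the exact per-eigenvalue identity. With $f(x)=-x\log x$,
\begin{equation*}
f(q_i)-f(p_i)=(q_i-p_i)E^\sigma_i+p_i\log\frac{q_i}{p_i},
\end{equation*}
and symmetrically with the roles of $p$ and $q$ exchanged. I will use the $\sigma$-form on $A_+$ and the $\rho$-form on $A_-$. The point of this choice is that on each set the weight $\abs{q_i-p_i}$ is dominated by the probability of the state whose modular energy appears: $q_i-p_i\le q_i$ on $A_+$, and $p_i-q_i\le p_i$ on $A_-$. The leftover logarithmic terms are of the form $\min(p_i,q_i)\log(\text{ratio})$ summed over the overlap, possibly with opposite signs. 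I plan to collect them, together with the mean shifts $\eta S_{vN}(\rho)$ and $\eta S_{vN}(\sigma)$ produced by centering, into an expression of the form $\eta\lp S_{vN}(\rho)-S_{vN}(\sigma)\rp$ plus a logarithm of the overlap weight.

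Jensen \eqref{ineq_jensen} for the concave $\log$, applied over the normalized overlap measure $\min(p_i,q_i)/(1-\eta)$, should bound the logarithmic remainder by $-(1-\eta)\log(1-\eta)$. After dividing by $1-\eta$ when moving the $\eta(S_{vN}(\rho)-S_{vN}(\sigma))$ term to the left, this produces exactly the $-\log(1-\delta/2)$ in \eqref{capacity_continuity}.

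The capacity terms come from a centered Cauchy-Schwarz estimate. For $X=E^\sigma-S_{vN}(\sigma)$, which has mean zero under $q$, and any weights $0\le w_i\le q_i$ with $\sum_i w_i=\eta$,
\begin{equation*}
\Big|\sum_i w_iX_i\Big|\le\sqrt{C_\sigma\,\eta(1-\eta)}.
\end{equation*}
This is \eqref{ineq_cs} applied to $X$ against $w/q-\eta$, whose $q$-variance is at most $\eta(1-\eta)$ because $0\le w/q\le 1$. The same estimate with $\rho$ handles $A_-$. Dividing by the factor $1-\eta$ gives $\sqrt{\eta/(1-\eta)}\,(\sqrt{C_\rho}+\sqrt{C_\sigma})$, and $\eta\le\delta/2$ gives the stated coefficient.

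The main obstacle is the bookkeeping in the first step. The log-ratio terms and the centering shifts have to combine into precisely $\eta\lp S_{vN}(\rho)-S_{vN}(\sigma)\rp$ plus a term that Jensen can control, with no stray cross term such as $\sum_{A_+}(q_i-p_i)\lp E^\rho_i-E^\sigma_i\rp$ left over. If the mixed choice of identity leaves such a term, I would first try the alternative of using concavity tangents, $f(q)-f(p)\le f'(p)(q-p)$, on one of the two sets, since that leaves a residual with a definite sign. Saturation can then be checked on a two-block family, a state that is flat on one block perturbed by moving weight $\eta$ to another, where every inequality above is tight.
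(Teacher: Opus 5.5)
Your proposal is correct and is essentially the paper's proof. The Mirsky reduction, the split of $p-q$ into positive and negative parts with each part charged to the modular energy of the state whose weight dominates it, the centering that produces the factor $1-\eta$, and the Cauchy--Schwarz step against $w/q-\eta$ using $0\le w/q\le 1$ all match the paper. The only difference is the additive term: the paper discards the nonpositive remainder on one set (via $f(x)/x$ decreasing) and applies Jensen to $\phi(h)=-(1-h)\log(1-h)$ against $q$ on the other, whereas you keep both remainders and apply Jensen for $\log$ over the overlap measure.

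Your bookkeeping does close once you fix a sign slip in your identity. The correct forms are $f(q_i)-f(p_i)=(q_i-p_i)E^\sigma_i+p_i\log(p_i/q_i)$ on $A_+$ and $f(q_i)-f(p_i)=(q_i-p_i)E^\rho_i+q_i\log(p_i/q_i)$ on $A_-$. With these, the centering shift is exactly $+\eta\,(S_{vN}(\sigma)-S_{vN}(\rho))$, with no cross term, so the tangent-bound fallback is unnecessary. The two remainders combine into $\sum_i\min(p_i,q_i)\log(p_i/q_i)$, and Jensen bounds this by $-(1-\eta)\log(1-\eta)$ because $\min(p_i,q_i)\,p_i/q_i\le p_i$ for every $i$.
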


The right side involves no dimension and vanishes as $\delta \to 0$.
When both capacities are $\mO(S_{vN})$, as they are for two holographic states, the right side is $\mO(\sqrt{\delta S_{vN}})$ at small $\delta$, so the ambiguity in $S_{vN}$ over a preparation ball of radius $\delta$ is subextensive whenever $\delta$ is.
That reading comes with a condition. The theorem itself holds for arbitrary states, but $C = \mO(S_{vN})$ is a holographic input, and it requires a single dominant replica saddle in a fixed neighborhood of $\alpha = 1$. It fails when two saddles coexist within $\mO(G_N)$ of the identity index, where the modular distribution is bimodal and $C \gtrsim S_{vN}^2$ instead. At such a point the right side is at least of order $\sqrt{\delta}\,S_{vN}$ at small $\delta$, which is extensive, and the von Neumann entropy of a prepared state is once again ill-posed. Section \ref{sec:discussion} identifies when this occurs.
While the target can be assumed holographic, the perturbed state is an arbitrary element of the preparation ball, so $C_\sigma$ is not controlled by \eqref{renyi_leading_order}.
Assuming that it were would defeat the point of the bound, whose job is to cover exactly those states in the ball that are not holographic. Such states can carry $C_\sigma$ of order $(\log d)^2$, and on them \eqref{capacity_continuity} only matches Fannes-Audenaert at leading order in $\log d$, which the third feature below works out.

\begin{proof}
Both $S_{vN}$ and $C$ depend only on the spectrum, and by \eqref{ineq_mirsky} the sorted spectra satisfy $\norm{\lambda^\downarrow(\rho) - \lambda^\downarrow(\sigma)}_1 \le \norm{\rho-\sigma}_1$.
Mirsky's bound may be strict, and the right side of \eqref{capacity_continuity} is increasing in the trace norm, so proving the statement at the distance between the sorted spectra gives it at $\delta$.
It therefore suffices to prove it for two probability vectors $p$ and $q$, with $2a$ now denoting $\norm{p-q}_1$, which we now do.

Write $u = (p-q)_+$ and $v = (q-p)_+$, supported on the disjoint sets $A = \{p_i > q_i\}$ and its complement, each of total weight $a$.
Let $f(x) = -x\log x$, so that $S_{vN}(p) - S_{vN}(q) = \sum_i \lb f(p_i) - f(q_i) \rb$.

On $A$ we use that $f(x)/x = -\log x$ is decreasing, so $q_i \le p_i$ gives $f(q_i) \ge (q_i/p_i) f(p_i)$ and therefore
\begin{equation}
    f(p_i) - f(q_i) \;\le\; f(p_i)\lp 1 - \frac{q_i}{p_i} \rp = u_i \lp -\log p_i \rp = u_i E^p_i ,
\end{equation}
which also holds when $q_i = 0$.
Off $A$ we do not use concavity of $f$, which loses a factor, but an exact identity.
Set $h_i := v_i/q_i = 1 - p_i/q_i \in [0,1]$ off $A$ and $h_i := 0$ on $A$, so that $\sum_i q_i h_i = a$. Indices with $q_i = 0$ force $p_i = 0$ and contribute nothing, and are dropped.
Writing $p_i = q_i(1-h_i)$ and expanding,
\begin{equation}
    f(p_i) - f(q_i) \;=\; -v_i E^q_i \;+\; q_i\,\phi(h_i) ,
    \qquad \phi(h) := -(1-h)\log(1-h) ,
    \label{cont_identity}
\end{equation}
with no inequality yet used.
Since $\phi''(h) = -1/(1-h) < 0$ on $[0,1)$, $\phi$ is concave and $\phi(0) = 0$, so Jensen's inequality \eqref{ineq_jensen} against the probability vector $q$ gives
\begin{equation}
    \sum_i q_i\,\phi(h_i) \;\le\; \phi\lp \sum_i q_i h_i \rp \;=\; \phi(a) \;=\; -(1-a)\log(1-a) .
    \label{cont_jensen}
\end{equation}
Summing,
\begin{equation}
    S_{vN}(p) - S_{vN}(q) \;\le\; \sum_{i \in A} u_i E^p_i \;-\; \sum_{i \notin A} v_i E^q_i \;-\; (1-a)\log(1-a) .
    \label{cont_split}
\end{equation}
Write $X_i = E^p_i - S_{vN}(p)$ and $Y_i = E^q_i - S_{vN}(q)$, which have mean zero and variance $C_p$ and $C_q$ under $p$ and $q$ respectively.
Since $u$ and $v$ each carry weight $a$, substituting into \eqref{cont_split} and collecting the two terms proportional to $a$ gives
\begin{equation}
    (1-a)\lb S_{vN}(p) - S_{vN}(q) \rb
    \;\le\; \sum_{i \in A} u_i X_i \;-\; \sum_{i \notin A} v_i Y_i \;-\; (1-a)\log(1-a) .
    \label{cont_rearranged}
\end{equation}
Both sums are bounded by the same one-line argument.
Set $g_i = u_i/p_i$ on $A$ and zero elsewhere, which lies in $[0,1]$ because $u_i \le p_i$, and satisfies $\sum_i p_i g_i = a$.
Since $X$ has mean zero under $p$,
\begin{equation}
    \sum_{i \in A} u_i X_i = \sum_i p_i g_i X_i = \sum_i p_i (g_i - a) X_i
    \;\le\; \sqrt{\, \sum_i p_i (g_i-a)^2 \,}\; \sqrt{C_p}
    \;\le\; \sqrt{a(1-a)}\,\sqrt{C_p} ,
    \label{cont_cs}
\end{equation}
by Cauchy-Schwarz \eqref{ineq_cs} and then $g^2 \le g$, which gives $\sum_i p_i(g_i-a)^2 = \sum_i p_i g_i^2 - a^2 \le a - a^2$.
The same computation with $h_i = v_i/q_i$ off $A$ and $-Y$ in place of $X$ bounds $-\sum_{i \notin A} v_i Y_i$ by $\sqrt{a(1-a)}\sqrt{C_q}$.
Inserting both into \eqref{cont_rearranged} and dividing by $1-a$, which is positive because $\delta < 2$, gives \eqref{capacity_continuity} at $2a$, and hence at $\delta$ by the reduction above, since $\sqrt{a(1-a)}/(1-a) = \sqrt{a/(1-a)}$ and the additive term $-(1-a)\log(1-a)$ divided by $1-a$ is exactly $-\log(1-a) = -\log(1-\delta/2)$.
Exchanging $p$ and $q$ bounds the difference in the opposite direction.
\end{proof}

Three features of \eqref{capacity_continuity} govern how it should be used. Throughout these it is convenient to write $a = \delta/2$, so that the coefficient reads $\sqrt{a/(1-a)}$ and the additive term $-\log(1-a)$.

First, the bound is saturated by particular pairs of states.
Take $\sigma$ pure and $\rho = (1-a)\ketbra{0}{0} + a\,\tau_M$ with $\tau_M$ maximally mixed on a space of dimension $M = e^\Lambda$ orthogonal to it, at trace norm exactly $\delta = 2a$.
Then $S_{vN}(\rho) = h(a) + a\Lambda$ with $h$ the binary entropy, $C_\rho = a(1-a)\lp \Lambda + \log\tfrac{1-a}{a}\rp^2$ and $C_\sigma = 0$, so
\begin{equation}
    \abs{\Delta S_{vN}} \;=\; h(a) + a\Lambda \;=\; \sqrt{\frac{a}{1-a}}\lp \sqrt{C_\rho} + \sqrt{C_\sigma} \rp - \log(1-a) ,
    \label{sharp_continuity}
\end{equation}
with equality at every $a$ and every $M \ge a/(1-a)$, the last condition being what makes $\Lambda + \log\tfrac{1-a}{a}$ nonnegative so that the Cauchy-Schwarz step is tight. It holds for every $M$ once $a \le 1/2$.
A second family saturates it as well, the pair of nested flat states of ranks $N$ and $M$, for which both capacities vanish and $\abs{\Delta S_{vN}} = \log(N/M) = -\log(1-a)$. That pair also shows the additive term cannot be removed altogether.
So neither the coefficient nor the additive term can be lowered.

Second, saturation occurs only where one of the two capacities vanishes.
Fix $a \in (0,1)$ and suppose \eqref{capacity_continuity} is attained in the direction $S_{vN}(p) - S_{vN}(q)$.
Since $\phi$ is strictly concave, the Jensen step \eqref{cont_jensen} is tight only if $h$ is constant $q$-almost surely.
That constant cannot be zero, since $\sum_i q_i h_i = a > 0$, so no index of $A$ carries $q$-weight and $h_i = a$ off $A$, which is to say $q$ vanishes on $A$ and $p = (1-a)q$ elsewhere.
On such a pair $\sum_i q_i (h_i - a)^2 = 0$, so the off-$A$ sum in \eqref{cont_rearranged} vanishes, while the bound carried forward for it is $\sqrt{a(1-a)}\sqrt{C_q}$, since \eqref{cont_cs} replaces $\sum_i q_i (h_i-a)^2$ by $a(1-a)$.
That leaves a gap of $\sqrt{a(1-a)\,C_q}$, so attaining \eqref{capacity_continuity} forces $C_q = 0$.
The converse fails, the Cauchy-Schwarz step on the other side requiring in addition that $X$ be proportional to $g - a$.
Both families above are of exactly this shape, with the pure state and the smaller flat state in the role of $q$.
A bound smaller than \eqref{capacity_continuity} in the interior is therefore not excluded, provided it reduces to \eqref{capacity_continuity} on the two axes.

Third, the bound is now comparable with Fannes-Audenaert on the states that motivate it rather than an order of magnitude worse.
Those states are $\sigma = (1-a)\rho + a\,\tau_d$ with $\tau_d$ maximally mixed on the regulated space, for which
\begin{equation}
    C_\sigma \;\simeq\; a(1-a)\lp \log d - S_{vN} + \log\tfrac{1-a}{a} \rp^2 ,
    \qquad
    \sqrt{\frac{a}{1-a}}\sqrt{C_\sigma} \;=\; a\lp \log d - S_{vN} + \log\tfrac{1-a}{a} \rp ,
\end{equation}
against the Fannes-Audenaert value $\tfrac{\delta}{2}\log d = a\log d$.
The two agree at leading order in $\log d$, and carrying the subleading term makes the capacity bound the smaller of the two by $a\lp S_{vN} - \log\tfrac{1-a}{a} \rp + \mO(\sqrt{C_\rho})$, so on the very states that motivate the dimensional factor it is no worse.

On states whose capacity is small compared with $(\log d)^2$, which is true of holographic states because $C$ scales like the area of the entangling surface while $\log d$ scales like the volume of the subregion,
\eqref{capacity_continuity} is the stronger of the two by the ratio $\sqrt{C}/(\sqrt{a(1-a)}\log d)$.
The comparison is not uniform, and the prefactor matters. At equal capacities \eqref{capacity_continuity} is the smaller of the two only for $C \lesssim a(1-a)(\log d)^2/4$, a threshold that tightens as $\delta$ shrinks, and on states with $C \simeq (\log d)^2/4$, which the maximally mixed contaminant above realizes at $a = \tfrac12$ once $S_{vN} \ll \log d$, it is larger than Fannes-Audenaert by a factor $1/\sqrt{a(1-a)}$. Nothing forbids quoting the smaller of the two, and in the holographic regime the ratio above vanishes as a power of the lattice spacing, so the choice only matters at coarse regulator.
The rate at which \eqref{capacity_continuity} degrades as the regulator is removed is therefore fixed by the state rather than by the regulator alone.
The energy-constrained bounds of Winter \cite{Winter:2015qhs} are the closest precedent, being dimension-free and vanishing with $\delta$, but they restrict to states of bounded mean energy where \eqref{capacity_continuity} holds for arbitrary states, the two capacities appearing in the bound rather than defining the class it applies to.
Two recent refinements of Fannes-Audenaert improve the dimensional factor without removing it.
That of \cite{Audenaert2025ContinuityBounds} replaces $\log(d-1)$ by the von Neumann entropies of the normalized positive and negative parts of $\rho - \sigma$, and both \cite{Audenaert2025ContinuityBounds} and \cite{Berta2025IntegralRepresentations} give a form replacing it by $\log\lp d\,\lambda_{\max}(\sigma) - 1\rp$ over a restricted range of trace distances.
Both remain tied to the regulated volume, through $d$ in the second case and through the entropy of a state supported on it in the first.
The contrast to draw with \eqref{capacity_continuity} is not finiteness, since $\sqrt{C}$ diverges as the cutoff is removed as well, but rate. For a holographic subregion a dimensional factor grows like the volume of the regulated region where $\sqrt{C}$ grows like the square root of its area, so the ratio of the two vanishes as a power of the lattice spacing above two dimensions, and as a power times a logarithm in two dimensions, where $\log d$ grows like $\ell/a$ while $C$ grows like $\log(\ell/a)$.
After the reduction to sorted spectra the proof above splits $p - q$ into positive and negative parts exactly as \cite{Audenaert2025ContinuityBounds} splits $\rho - \sigma$. What differs is the next step, where those pieces are estimated against the modular energy rather than against a dimension.

Theorem \ref{thm:continuity} also sharpens rather than contradicts the statement of section \ref{sec:ball} that $S_{vN}$ is ill-posed over the ball.
The ball does contain states whose entropies differ from that of $\rho$ at leading order, but every such state carries a capacity of order $S_{vN}^2/\delta$, so its modular energy distribution is broad rather than concentrated.
Such states are the contaminants that any inference from measured moments has to exclude, and \eqref{capacity_continuity} licenses treating the residual ambiguity of the target as being of order $\sqrt{\delta\, C}$ rather than of order $\ep \log d$.
A contaminant is therefore harmless as soon as its capacity is small compared with $S_{vN}^2/\delta$, which is a far weaker demand than the $\mO(S_{vN})$ that holds for the target itself.

\section{Flatness of the Smooth \Renyi Spectrum}
\label{sec:flatness}

The continuity bound of section \ref{sec:continuity} controls the von Neumann entropy itself over the preparation ball.
The quantities an experiment can actually report are the smooth entropies, and in holographic theories they agree with the von Neumann entropy to leading order in the central charge $c$, or equivalently in $1/G_N$.
The bare \Renyi entropies are \textit{not} flat, since $s_\alpha$ depends on $\alpha$ at leading order in \eqref{renyi_leading_order}, and it is only after smoothing that the spectrum flattens.

\subsection{Smooth entropies as the worst case over the ball}
\label{sec:oneshot}

With finite resources the operative entropic quantities are the one-shot entropies.
Two important examples are the max-entropy and min-entropy
\begin{equation}
    S_{\max}(\rho) = \log \rank(\rho),
    \qquad
    S_{\min}(\rho) = -\log \lambda_{\max}(\rho),
\end{equation}
where $\lambda_{\max}(\rho)$ is the largest eigenvalue of $\rho$.\footnote{\label{fn:conventions}Conventions in the one-shot literature differ, over whether the max-entropy is defined through $S_{1/2}$ or $S_0$, over subnormalized states or purified distance in place of the normalized trace-norm ball used here \cite{Renner:2005qbj, Tomamichel:2009ohu, Tomamichel:2015gtd}, and over a factor of two in the radius. Additive shifts of $\mO(\log(1/\ep))$ and constant rescalings of $\ep$ are immaterial below. A polynomial relabeling such as $\ep \to \ep^2/2$ is not, for a half-width going as $\ep^{-1/2}$. Relatedly, the expansions quoted from \cite{NuradhaWilde2023Fidelity} carry a minus sign in front of $\Phi^{-1}(\ep)$ and a minimization over their ball. Both flip under the identification $S_\alpha = -\widetilde{D}_\alpha(\cdot\|\mbI)$ used here, which is why the displays below carry a plus sign and a maximization.}
In the notation of the one-shot literature these are $H_0$ and $H_\infty$, and we use $S_{\max}$ and $S_{\min}$ only as shorthand for them.
They are the $\alpha \to 0$ and $\alpha \to \infty$ limits of the \Renyi entropies \eqref{renyi_def}, and since $S_\alpha$ is nonincreasing in $\alpha$ they bracket the entire family,
\begin{equation}
    S_{\min}(\rho) \le S_\alpha(\rho) \le S_{\max}(\rho)
    \qquad \text{for all } \alpha > 0 .
    \label{renyi_ordering}
\end{equation}
All \Renyi entropies coincide exactly when the spectrum of $\rho$ is flat.

The one-shot entropies of $\rho$ itself are dominated by rare spectral outliers, a single anomalously large eigenvalue for $S_{\min}$ and an exponentially large number of anomalously small eigenvalues for $S_{\max}$.
Operationally meaningful quantities must be insensitive to such tails, which is the same requirement as insensitivity to the preparation ambiguity of section \ref{sec:ball}.
This motivates smoothing \cite{Renner2004SmoothRE, Renner:2005qbj}.
With $\Ball{\ep}{\rho}$ the ball of \eqref{ball_def}, the smooth min- and max-entropies are
\begin{equation}
    S^\ep_{\max}(\rho) = \min_{\sigma \in \Ball{\ep}{\rho}} S_{\max}(\sigma),
    \qquad
    S^\ep_{\min}(\rho) = \max_{\sigma \in \Ball{\ep}{\rho}} S_{\min}(\sigma),
\end{equation}
and at $\alpha > 1$ the smooth \Renyi entropy is \eqref{smooth_entropy_def}.
The direction of optimization is fixed by the role of the tails.
Entropies with $\alpha > 1$ are dragged down by large eigenvalues, so smoothing removes those and raises the entropy, while the max-entropy is inflated by the small-eigenvalue tail, so smoothing lowers it.
Smoothed versions of these quantities carry direct operational meaning at finite copy number, quantifying one-shot state merging, randomness extraction, and compression \cite{Koenig:2009avh, Tomamichel:2015gtd, Wilming:2018rvz}, with the caveat that those tasks are governed by the $S_{1/2}$-based max-entropy under purified distance rather than by $H_0$ under trace distance.
The two differ by an additive $\mO(\log(1/\ep))$, which is invisible at the $\mO(\sqrt{S_{vN}})$ resolution of everything below, but the identification is not an equality and we do not use one.

The smoothing plays the same role as the asymptotic i.i.d. limit, which washes out rare events at rate $\sqrt{n}$ by the quantum asymptotic equipartition property \cite{Renner:2005qbj, Tomamichel:2009ohu}, under which the smooth entropies converge to the von Neumann entropy at every fixed \Renyi index,
\begin{equation}
    \lim_{\ep \to 0} \lim_{m \to \infty} \frac{1}{m} S^\ep_\alpha \lp \rho^{\otimes m} \rp = S_{vN}(\rho) .
    \label{aep}
\end{equation}
No $\alpha \to 1$ limit is taken in \eqref{aep}, and none is needed.
The spectrum of $\rho^{\otimes m}$ concentrates on a typical subspace whose eigenvalues are all $e^{-m S_{vN} + \mO(\sqrt{m})}$, so once the smoothing removes the atypical tails the remaining spectrum is flat and the index becomes a spectator, whereas the bare \Renyi entropies are additive and retain their $\alpha$-dependence for all $m$.
The smoothing is thus what eliminates the need for continuation in $\alpha$, and the results below replicate that structure.

\subsection{Bounds at each fixed \Renyi index}

The construction behind every bound in this subsection is a single truncation.
Cutting away the eigenvalues whose modular energy lies more than a half-width from the mean costs at most $\ep$ in trace distance, by \eqref{ineq_cantelli}, and what survives is a state whose rank and whose largest eigenvalue are both controlled by $S_{vN}$ and $C$ alone.

For an arbitrary state the same two moments were used by Boes, Ng and Wilming \cite{Boes:2020vpv} to prove, in their Result 2,
\begin{equation}
    S^\ep_{\max}(\rho) - S_{vN} \;\le\; W_\ep,
    \qquad
    S_{vN} - S^\ep_{\min}(\rho) \;\le\; W_\ep,
    \label{bnw}
\end{equation}
with $W_\ep$ the half-width defined in the lemma below, written here in the normalization of \eqref{ball_def}.
Their variance of surprisal $V(\rho) = \Tr(\rho \log^2\rho) - S_{vN}^2$ is exactly the capacity \eqref{capacity_def}.
Both bounds in \eqref{bnw} run the wrong way for our purposes, since $S^\ep_{\max}$ is a minimum over the ball where $S^\ep_\alpha$ at $\alpha > 1$ is a maximum, so \eqref{bnw} cannot be chained to cap a smooth \Renyi entropy from above.
Theorem \ref{thm:ceiling} supplies that ceiling.

The two moments \eqref{modular_mean} and \eqref{capacity_def} are the only input required, holographic or otherwise, and the following lemma, whose content is the projected-state construction of \cite{Bao:2018pvs}, packages their consequences.
We state it, and the theorem that follows, for an arbitrary state, the holographic scaling entering only afterwards through $C = \mO(S_{vN})$.

\begin{lemma}
\label{lem:typical}
Let $\rho$ be any state with capacity $C$, fix $\ep \in (0,1)$, and set
\begin{equation}
    W_\ep := \sqrt{\lp \frac{2}{\ep} - 1 \rp C} \, .
    \label{window}
\end{equation}
Let $P_+$ project onto the eigenspaces of $\rho$ with modular energy $E \le S_{vN} + W_\ep$, and $P_-$ onto those with $E \ge S_{vN} - W_\ep$.
Then
\begin{enumerate}[(i)]
    \item each window carries almost all the weight, $\Tr(P_\pm\rho) \ge 1 - \ep/2$,
    \item the upper window has bounded dimension,
    \begin{equation}
        d_P := \rank(P_+) \;\le\; e^{\, S_{vN} + W_\ep} ,
        \label{rank_bound}
    \end{equation}
    \item the projected states
    \begin{equation}
        \sigma_\pm := \frac{P_\pm \rho P_\pm}{\Tr(P_\pm\rho)}
        \label{projected_state}
    \end{equation}
    lie in $\Ball{\ep}{\rho}$.

    \item and there is a state $\mu \in \Ball{\ep}{\rho}$ with a small leading eigenvalue,
    \begin{equation}
        \lambda_{\max}(\mu) \;\le\; e^{-(S_{vN} - W_\ep)} .
        \label{waterfill}
    \end{equation}
\end{enumerate}
\end{lemma}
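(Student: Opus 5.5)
Each part of Lemma \ref{lem:typical} follows from the two moments \eqref{modular_mean} and \eqref{capacity_def} and from the fact that every object in the lemma is diagonal in the eigenbasis of $\rho$. All trace norms can therefore be computed as $\ell^1$ distances between spectra, and no operator inequality beyond this is needed. The window \eqref{window} is chosen so that Cantelli's inequality \eqref{ineq_cantelli}, with $V = C$ and $W = W_\ep$, gives exactly $C/(C + (2/\ep - 1)C) = \ep/2$. This settles (i) once it is applied to each tail separately: to $E$ for $P_+$, and to $-E$, which has the same variance, for $P_-$. The complement of $P_+$ is the event $E > S_{vN} + W_\ep$, and the complement of $P_-$ is $E < S_{vN} - W_\ep$, so each carries weight at most $\ep/2$.

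For (ii), every eigenvalue retained by $P_+$ satisfies $\lambda_i = e^{-E_i} \ge e^{-(S_{vN}+W_\ep)}$. These eigenvalues sum to at most one, so at most $e^{S_{vN}+W_\ep}$ of them can appear, counted with multiplicity, which is \eqref{rank_bound}. For (iii), write $w = \Tr(P_\pm\rho) \ge 1 - \ep/2$. Because $\sigma_\pm$ commutes with $\rho$, the norm $\norm{\rho - \sigma_\pm}_1$ splits into two contributions. The discarded eigenvalues contribute $1-w$. The retained ones contribute $\sum (\lambda_i/w - \lambda_i) = 1 - w$. The total is $2(1-w) \le \ep$.

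Part (iv) is the step that needs a construction. I would use water-filling with threshold $\tau := e^{-(S_{vN} - W_\ep)}$. First cap every eigenvalue above $\tau$ at $\tau$. The eigenvalues above $\tau$ are exactly those with $E < S_{vN} - W_\ep$, which by (i) carry total weight at most $\ep/2$, so the excess $x$ removed satisfies $x \le \ep/2$. Next add the mass $x$ back onto eigenvectors whose eigenvalues lie below $\tau$, raising none of them above $\tau$, in the same eigenbasis. The result $\mu$ is a state with $\lambda_{\max}(\mu) \le \tau$. Its spectrum differs from that of $\rho$ by $x$ removed and $x$ added, so $\norm{\rho - \mu}_1 = 2x \le \ep$.

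The main obstacle is showing there is enough room to add the mass back, which means showing $d\tau \ge 1$. Once that holds, the capped vector has total mass $1 - x \le d\tau - x$, so the available room $\sum_i(\tau - \lambda_i)$ is at least $x$. The condition $d\tau \ge 1$ follows from $S_{vN} \le \log d$, which gives $\tau \ge e^{-S_{vN}} \ge 1/d$. This is the one place where finite dimensionality of the regulated space enters, and the dimension appears only in this existence argument, never in the bound itself.
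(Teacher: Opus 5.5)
Your proposal is correct and follows the paper's proof essentially step for step: Cantelli's inequality with the window tuned to give exactly $\ep/2$, eigenvalue counting for the rank bound, the diagonal $\ell^1$ computation giving $2(1-w)$, and the same cap-and-refill construction at $\tau = e^{-(S_{vN}-W_\ep)}$, with room guaranteed by $d\tau \ge e^{W_\ep} \ge 1$. Two small points. The paper treats $C = 0$ separately, where your Cantelli ratio reads $0/0$; in that case $E = S_{vN}$ on the whole support, so the non-strict windows discard nothing and (i) holds trivially. In (iv) the room should be written $\sum_i(\tau - \min(\lambda_i,\tau)) = d\tau - (1-x)$ rather than $\sum_i(\tau-\lambda_i)$, which is what your inequality actually shows.
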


\begin{proof}
Property (i) is \eqref{ineq_cantelli} applied to the modular energy, whose mean is $S_{vN}$ and whose variance is $C$, and the choice \eqref{window} makes the right side of that inequality exactly $\ep/2$.
At $C = 0$ the ratio is read as zero, which is correct because the windows are defined by non-strict inequalities and the discarded set is then empty.
Applying the same inequality to $S_{vN} - E$ gives the other side.
The discarded weight is $1 - \Tr(P_\pm\rho)$, since $\Tr(P_\pm\rho) = \sum_{i \in \text{window}} \lambda_i$ in the shared eigenbasis of $\rho$ and $K$.
No input beyond the two moments \eqref{modular_mean} and \eqref{capacity_def} is used, and no assumption is made about the shape of the distribution.
Each of the arguments below discards only one of the two tails, which is why the one-sided form is the natural one here.
Using the two-sided form at the same total discarded weight gives $\sqrt{2C/\ep}$, larger by only $(1-\ep/2)^{-1/2}$, so the choice is immaterial at the small radii the rest of the paper uses.

Property (ii) counts eigenvalues.
Every eigenvalue retained by $P_+$ satisfies $E_i \le S_{vN} + W_\ep$, that is $\lambda_i \ge e^{-(S_{vN} + W_\ep)}$, so if $d_P$ eigenvalues are retained their sum is at least $d_P \, e^{-(S_{vN} + W_\ep)}$.
The retained eigenvalues are a subset of the full spectrum, so their sum is at most one, giving \eqref{rank_bound}.

For (iii), $K = -\log\rho$ commutes with $\rho$, so $P_\pm$, $\rho$ and $\sigma_\pm$ are simultaneously diagonal, $\rho - \sigma_\pm$ is diagonal in the same basis, and its trace norm is the sum of the absolute values of its eigenvalues.
Write $\eta := 1 - \Tr(P_\pm\rho) \le \ep/2$ for the discarded weight.

The states $\sigma_\pm$ carry the in-window eigenvalues rescaled by $(1-\eta)^{-1}$ and zeroes the rest, so the eigenvalues of $\rho - \sigma_\pm$ are $\lambda_i - \lambda_i/(1-\eta)= -\lambda_i\, \eta/(1-\eta)$ inside the window and $\lambda_i$ outside.
Then,
\begin{equation}
    \norm{\rho - \sigma_\pm}_1
    = \frac{\eta}{1-\eta} \sum_{i \in \text{window}} \lambda_i + \sum_{i \notin \text{window}} \lambda_i
    = \frac{\eta}{1-\eta}(1 - \eta) + \eta = 2\eta \le \ep ,
\end{equation}
so $\sigma_\pm \in \Ball{\ep}{\rho}$.

Part (iv) removes that rescaling by capping the spectrum at $t$ instead of renormalizing it.
Let $t := e^{-(S_{vN} - W_\ep)}$ and $(x)_+ := \max(x,0)$, and let $R := \sum_i (\lambda_i - t)_+$ be the weight sitting above the cap.
Since $\lambda_i > t$ is the same condition as $E_i < S_{vN} - W_\ep$, \eqref{ineq_cantelli} gives $R \le \Pr(E < S_{vN} - W_\ep) \le \ep/2$.
Now build $\mu$ in two steps.
First cap, replacing each $\lambda_i$ by $\min(\lambda_i, t)$, which leaves total weight $1 - R$ and every entry at most $t$.
Then return the weight $R$ to entries still below $t$, never letting one exceed $t$.
That second step is possible because the room available is $d\,t - (1 - R)$, which is at least $R$ exactly when $d\,t \ge 1$, and $d\,t \ge e^{W_\ep} \ge 1$ by $S_{vN} \le \log d$.
The result is a state with $\lambda_{\max}(\mu) \le t$, and since it removes weight $R$ and adds the same amount back, $\norm{\rho - \mu}_1 = 2R \le \ep$, which is \eqref{waterfill}.
\end{proof}

Intersecting the two windows gives the two-sided typical subspace of \cite{Bao:2018pvs}, which lies in $\Ball{2\ep}{\rho}$ rather than $\Ball{\ep}{\rho}$ and whose spectrum is flat at leading order.
Its largest and smallest eigenvalues differ by the large factor $e^{2W_\ep}$, but their logarithms agree at leading order in $S_{vN}$, which is the scale entropies are sensitive to.
That state is the analog of the typical subspace of the i.i.d. setting.
Since each argument uses only one tail, the lemma bounds the upper and lower windows separately rather than their intersection.

For a holographic state the half-width may be replaced by the sharper $\widetilde{W}_\ep$ of \eqref{gaussian_window}, which section \ref{sec:dos} derives from the leading-order form \eqref{renyi_leading_order} and which proposition \ref{prop:subgauss} and corollary \ref{cor-window} together prove correct up to the $G_N$-independent additive constant of \eqref{gaussian_window_proved}.
Because $\widetilde{W}_\ep$ is fixed by a two-sided weight of $\ep/2$, it discards $\ep/4$ on each side and therefore satisfies lemma \ref{lem:typical}(i) a fortiori.

Unlike \eqref{window}, which uses nothing beyond the two moments and holds for any state, that form is conditional on \eqref{renyi_leading_order}.
It improves on \eqref{window} exactly when $2\log(4/\ep) < 2/\ep - 1$, whose left side minus right side is decreasing on $\ep \in (0,1)$ and changes sign between $\ep = 1/3$ and $\ep = 2/5$, and it matters when $\ep$ is taken small with $G_N$.
Every statement below that uses it is flagged as conditional, with the assumption-free form quoted alongside.

Applying that truncation in both directions pins the smooth \Renyi entropy at every index from below and from above.
The floor comes from exhibiting a single state in the ball whose largest eigenvalue is small, and the ceiling from showing that no state in the ball can concentrate on a subspace smaller than the modular window, so the two close to within a half-width of $S_{vN}$.

\begin{theorem}
\label{thm:ceiling}
Let $\rho$ be any state on a finite-dimensional Hilbert space, with von Neumann entropy $S_{vN}$ and capacity $C$, and fix $\ep \in (0,1)$.
Then, with $W_\ep$ the half-width \eqref{window}, for every $\alpha > 1$,
\begin{equation}
    S_{vN} - W_\ep
    \;\le\; S^\ep_\alpha(\rho) \;\le\;
    S_{vN} + W_\ep + \frac{\alpha}{\alpha-1}\log\frac{1}{1-\ep} .
    \label{ceiling_bound}
\end{equation}
\end{theorem}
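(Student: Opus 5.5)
The plan is to prove the two sides of \eqref{ceiling_bound} separately, both from lemma \ref{lem:typical}. The floor is immediate. Part (iv) of the lemma supplies a state $\mu \in \Ball{\ep}{\rho}$ with $\lambda_{\max}(\mu) \le e^{-(S_{vN}-W_\ep)}$. By \eqref{renyi_ordering}, $S_\alpha(\mu) \ge S_{\min}(\mu) \ge S_{vN} - W_\ep$. Since $S^\ep_\alpha(\rho)$ is a maximum over the ball, it dominates $S_\alpha(\mu)$, which gives the lower bound at every $\alpha > 1$.

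For the ceiling, fix an arbitrary $\sigma \in \Ball{\ep}{\rho}$ and let $P := P_+$ be the upper-window projector of lemma \ref{lem:typical}. This projector is fixed by $\rho$ alone, before $\sigma$ is chosen. By part (i), $\Tr(P\rho) \ge 1-\ep/2$. By \eqref{ineq_proj}, $\Tr(P\sigma) \ge \Tr(P\rho) - \tfrac12\norm{\rho-\sigma}_1 \ge 1-\ep$. So every state in the ball places weight at least $1-\ep$ on a subspace of dimension $d_P \le e^{S_{vN}+W_\ep}$, by part (ii). The remaining task is the bounding step quoted from \cite{KhatriWilde2024Principles}. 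If $\Tr(P\sigma) \ge 1-\ep$ with $\rank P = r$, then $S_\alpha(\sigma) \le \log r + \tfrac{\alpha}{\alpha-1}\log\tfrac{1}{1-\ep}$. Taking the maximum over $\sigma$ then gives the upper bound.

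To prove that step, I would apply H\"older \eqref{ineq_holder} with $p = \alpha$ and $q = \alpha/(\alpha-1)$ to $\Tr(P\sigma)$:
\begin{equation}
    1-\ep \;\le\; \Tr(P\sigma) \;\le\; \norm{\sigma}_\alpha \norm{P}_{\alpha/(\alpha-1)} = \lp \Tr\sigma^\alpha \rp^{1/\alpha} r^{(\alpha-1)/\alpha} .
\end{equation}
Raising this to the power $\alpha$ gives $\Tr\sigma^\alpha \ge (1-\ep)^\alpha r^{1-\alpha}$. Taking $\tfrac{1}{1-\alpha}\log$, which reverses the inequality because $\alpha > 1$, yields $S_\alpha(\sigma) \le \log r + \tfrac{\alpha}{\alpha-1}\log\tfrac{1}{1-\ep}$, as required. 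The step that needs care is the direction of the inequalities and the fact that $\sigma$ does not commute with $P$. H\"older for Schatten norms handles the noncommutativity without any diagonalization, and $\norm{P}_q = r^{1/q}$ because $P$ is a projector. The main conceptual obstacle is the one the paper flags: a single projector built from $\rho$ must work uniformly over the whole ball. This is resolved by the factor of two in the window, which leaves weight $\ep/2$ discarded by $\rho$ and a further $\ep/2$ lost to the perturbation through \eqref{ineq_proj}.
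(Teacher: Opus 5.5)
Your proposal is correct and matches the paper's proof step for step. The floor comes from the capped state $\mu$ of lemma \ref{lem:typical}(iv) together with the R\'enyi ordering. The ceiling uses the single projector $P_+$ fixed by $\rho$, the bound \eqref{ineq_proj} to get $\Tr(P_+\sigma)\ge 1-\ep$ for every $\sigma$ in the ball, and H\"older at the conjugate pair $\lp\alpha,\tfrac{\alpha}{\alpha-1}\rp$ combined with the rank bound \eqref{rank_bound}.
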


\begin{proof}
For the lower bound, the state $\mu$ of lemma \ref{lem:typical}(iv) lies in $\Ball{\ep}{\rho}$, and by the \Renyi ordering \eqref{renyi_ordering} every $S_\alpha(\mu)$ is at least $S_{\min}(\mu) = -\log\lambda_{\max}(\mu)$, so by \eqref{waterfill}
\begin{equation}
    S^\ep_\alpha(\rho) \;\ge\; S_\alpha(\mu) \;\ge\; -\log\lambda_{\max}(\mu) \;\ge\; S_{vN} - W_\ep .
\end{equation}

For the upper bound, let $\sigma \in \Ball{\ep}{\rho}$ be arbitrary.
Its weight on the upper window is bounded below by \eqref{ineq_proj} together with lemma \ref{lem:typical}(i),
\begin{equation}
    \Tr(P_+\sigma) \;\ge\; \Tr(P_+\rho) - \tfrac{1}{2}\norm{\rho - \sigma}_1 \;\ge\; \lp 1 - \tfrac{\ep}{2} \rp - \tfrac{\ep}{2} \;=\; 1 - \ep .
\end{equation}
Since $P_+$ is a projector of rank $d_P$, $\norm{P_+}_{\alpha/(\alpha-1)} = d_P^{(\alpha-1)/\alpha}$, so \eqref{ineq_holder} at the conjugate pair $\lp \tfrac{\alpha}{\alpha-1}, \alpha \rp$ gives
\begin{equation}
    1 - \ep \;\le\; \Tr(P_+\sigma) \;\le\; \norm{P_+}_{\alpha/(\alpha-1)}\,\norm{\sigma}_\alpha
    \;=\; d_P^{(\alpha-1)/\alpha}\lp\Tr\sigma^\alpha\rp^{1/\alpha} .
\end{equation}
Hence $\Tr\sigma^\alpha \ge (1-\ep)^\alpha d_P^{1-\alpha}$, and applying $\tfrac{1}{1-\alpha}\log$, which reverses the inequality,
\begin{equation}
    S_\alpha(\sigma) \;\le\; \log d_P + \frac{\alpha}{\alpha-1}\log\frac{1}{1-\ep}
    \;\le\; S_{vN} + W_\ep + \frac{\alpha}{\alpha-1}\log\frac{1}{1-\ep}
\end{equation}
by the rank bound \eqref{rank_bound}.
The state $\sigma$ was an arbitrary element of the ball, so the same bound holds for the maximum $S^\ep_\alpha(\rho)$.
\end{proof}

Both steps are established results.
The second display of the proof, read as a statement about an arbitrary test operator rather than about $P_+$, is proposition 4.68 of \cite{KhatriWilde2024Principles} with the identity as reference operator.
They state it for the hypothesis testing relative entropy and prove it by data processing for the sandwiched \Renyi divergence rather than by H\"older. Because that quantity is a supremum over tests, an upper bound on it bounds every admissible test, and $P_+$ is one.
Their proposition contains no window, taking the test as given, and it is the rank bound \eqref{rank_bound} that carries the two moments.
That bound in turn is in substance result 2 of \cite{Boes:2020vpv}.
What the proof above does is put the two together, and the step that is not automatic is that a single projector built from $\rho$ alone certifies every $\sigma$ in the ball simultaneously, which is why \eqref{bnw} cannot simply be chained.

At $\alpha = 2$ the additive term is $-2\log(1-\ep)$ and the H\"older step is Cauchy-Schwarz, the purity being a Hilbert-Schmidt norm. The argument runs the same way at every index.
At $\alpha \to \infty$ the coefficient tends to one and the statement degenerates to the observation that a state with weight $1-\ep$ on a $d_P$-dimensional subspace has $\lambda_{\max} \ge (1-\ep)/d_P$.
Composing the projector construction of \cite{Hayden:unpublished, Bao:2018pvs} with the index-conversion lemma of \cite{Renner2004SmoothRE} gives instead an additive $\tfrac{1}{\alpha-1}\log(2/\ep)$, for $\ep < 2/3$. Their smoothing is over subnormalized truncations of the spectrum rather than the trace-norm ball \eqref{ball_def}, so the comparison below is indicative rather than exact, the conversion between the two notions of ball not being performed here.
The two cross at $\alpha_\star = \log(2/\ep)/\log\tfrac{1}{1-\ep}$, so \eqref{ceiling_bound} is the smaller at every fixed index once $\ep$ is small.
Whichever is quoted, the leading term is $W_\ep$ and the scaling of \eqref{main_result} is unaffected.

Specializing to a holographic state gives the collapse advertised in the introduction.

\begin{corollary}
    \label{cor:flatholo}
    Let $\rho$ be the reduced state of a holographic subregion governed by a single dominant replica saddle on a fixed neighborhood of $\alpha = 1$, so that $C = \mO(S_{vN})$ by \eqref{capacity_def}.
    Then theorem \ref{thm:ceiling} gives $S^\ep_\alpha(\rho) = S_{vN}\lp 1 + \mO(1/\sqrt{S_{vN}})\rp$ at each fixed $\alpha > 1$ and fixed $\ep$, since $W_\ep = \mO(\sqrt{S_{vN}})$ and the additive term is $\mO(1)$.
\end{corollary}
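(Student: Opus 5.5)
The corollary is a direct specialization of theorem \ref{thm:ceiling}, so the plan is to insert the holographic scaling of the capacity into the two-sided bound \eqref{ceiling_bound} and divide by $S_{vN}$. The first step is to check that the hypothesis delivers $C = \mO(S_{vN})$. By \eqref{capacity_def} we have $C = \partial_\alpha^2 \log Z(\alpha)|_{\alpha=1}$, and $\log Z(\alpha) = (1-\alpha)s_\alpha/G_N + \mO(1)$. Local smoothness, which is where the single-dominant-saddle assumption on the neighborhood $\abs{\alpha-1}\le t_0$ enters, ensures that differentiating twice keeps the remainder at order one. Hence $C = \mO(1/G_N)$. Since $S_{vN} = s_1/G_N + \mO(1)$ with $s_1 > 0$ fixed, this gives $C = \mO(S_{vN})$.

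Next I would substitute this into \eqref{window}. At fixed $\ep\in(0,1)$, the prefactor $2/\ep - 1$ is a constant, so $W_\ep = \sqrt{(2/\ep-1)C} = \mO(\sqrt{S_{vN}})$. The additive term in the upper bound, $\tfrac{\alpha}{\alpha-1}\log\tfrac{1}{1-\ep}$, depends only on the fixed $\alpha > 1$ and the fixed $\ep$. It is therefore $\mO(1)$, independent of $G_N$, and is absorbed into $\mO(\sqrt{S_{vN}})$ once $S_{vN}\ge 1$. Theorem \ref{thm:ceiling} then gives
\begin{equation}
    \abs{S^\ep_\alpha(\rho) - S_{vN}} \;\le\; W_\ep + \frac{\alpha}{\alpha-1}\log\frac{1}{1-\ep} \;=\; \mO\lp\sqrt{S_{vN}}\rp .
\end{equation}
Dividing by $S_{vN}$ yields $S^\ep_\alpha(\rho) = S_{vN}\lp 1 + \mO(1/\sqrt{S_{vN}})\rp$, which is \eqref{main_result}.

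The only step needing care is the first one, namely which implied constants are uniform. The constant in $C = \mO(S_{vN})$ comes from $\partial_\alpha^2[(1-\alpha)s_\alpha]$ at $\alpha=1$ together with the order-one bounds on the remainder's derivatives. Both are $G_N$-independent only under the local smoothness assumption, which fails if a second saddle approaches within $\mO(G_N)$ of $\alpha = 1$. I would state explicitly that the $\mO$ constants depend on $\ep$ and $\alpha$ but not on $G_N$. The corollary only needs the upper bound on $C$; the lower bound $C\gtrsim 1/G_N$ is not used. Optionally, the sharper conditional window $\widetilde W_\ep$ from corollary \ref{cor-window} could be used in place of $W_\ep$, but that is unnecessary at fixed $\ep$.
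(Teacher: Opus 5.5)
Your proposal is correct and follows the paper's own argument. You insert $C=\mO(S_{vN})$ from \eqref{capacity_def} into the window \eqref{window}, note that $\tfrac{\alpha}{\alpha-1}\log\tfrac{1}{1-\ep}$ is $\mO(1)$ at fixed $\alpha>1$ and $\ep$, and read off the two-sided bound \eqref{ceiling_bound}; your remarks that the constants are $G_N$-independent only under local smoothness, and that only the upper bound on $C$ is used, match the paper's own comments.
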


The single-saddle condition does real work.
If two replica saddles exchange dominance at, or within $\mO(G_N)$ of, the identity index, the modular distribution carries two semiclassical peaks separated by $\mO(S_{vN})$ with comparable weights, so that $C \gtrsim S_{vN}^2$, the window is $W_\ep \gtrsim S_{vN}$, and the collapse fails, a peak of weight exceeding $\ep$ being immovable by an $\ep$-perturbation.
Section \ref{sec:discussion} identifies the multi-component regions in which such an exchange happens.

The additive term is uniform for $\alpha$ bounded away from $1$ and diverges as $\alpha \to 1^+$, so the bound is weakest exactly where the smooth \Renyi entropy approaches the von Neumann entropy by continuity in the index.
Thus, the claim \eqref{main_result} is pointwise in $\alpha$ rather than uniform.

Theorem \ref{thm:ceiling} establishes the statement \eqref{main_result} advertised in the introduction, at each fixed $\alpha > 1$.
The same argument applied to $\sigma_\pm$ recovers the flatness of the smooth min- and max-entropies of \cite{Hayden:unpublished, Bao:2018pvs} at the same $\sqrt{S_{vN}}$ scaling.

Theorem \ref{thm:ceiling} does not give a continuity bound. One might hope to recover the bound of section \ref{sec:continuity} from it, since the smoothing ball is itself defined by trace distance, so that a perturbation can be absorbed into a shift of the smoothing radius at no cost in dimension.

Let $\norm{\rho-\sigma}_1 \le \delta$. Since $\Ball{\ep}{\sigma} \subseteq \Ball{\ep+\delta}{\rho}$ and $\Ball{\ep-\delta}{\rho} \subseteq \Ball{\ep}{\sigma}$ by the triangle inequality for the trace norm, maximizing $S_\alpha$ over the larger set in each case gives
\begin{equation}
    S^{\ep - \delta}_\alpha(\rho) \;\le\; S^\ep_\alpha(\sigma) \;\le\; S^{\ep + \delta}_\alpha(\rho)
    \label{absorb}
\end{equation}
for every $\alpha \in (1,\infty]$, where $\alpha = \infty$ is the min-entropy, the upper bound at every $\ep \ge 0$ and the lower at every $\ep \ge \delta$.

Chaining \eqref{absorb} with theorem \ref{thm:ceiling} does replace $\log d$ by $\sqrt{C}$, and it involves no dimension whatsoever, but it cannot produce a continuity bound.
The floor of theorem \ref{thm:ceiling} at $\rho$, then \eqref{absorb} with the two states exchanged, then the ceiling of theorem \ref{thm:ceiling} at $\sigma$ give
\begin{equation}
    S_{vN}(\rho) - W_\ep \;\le\; S^\ep_2(\rho) \;\le\; S^{\ep+\delta}_2(\sigma) \;\le\; S_{vN}(\sigma) + W_{\ep+\delta} - 2\log(1-\ep-\delta) ,
\end{equation}
so that $S_{vN}(\rho) - S_{vN}(\sigma) \le W_\ep + W_{\ep+\delta} - 2\log(1-\ep-\delta)$, and exchanging the two states again bounds the difference in the other direction.
The window $W_\ep$ \textit{grows} as $\ep$ shrinks, so the optimization pushes the smoothing radius to the top of its admissible range, and the two half-widths then stay of order $\sqrt{C}$ however small $\delta$ is.
At $\norm{\rho-\sigma}_1 \le \tfrac14$ and $\ep = \tfrac14$ the chain gives $S_{vN}(\rho) - S_{vN}(\sigma) \le \sqrt{7C_\rho} + \sqrt{3C_\sigma} + \log 4$, and exchanging the two states exchanges the two capacities, so the uniform separation bound is the symmetric
\begin{equation}
    \abs{S_{vN}(\rho) - S_{vN}(\sigma)} \;\le\; \sqrt{7}\lp\sqrt{C_\rho} + \sqrt{C_\sigma}\rp + \log 4 ,
\end{equation}
whose right side stops at order $\sqrt{C}$ instead of vanishing with $\delta$, and which holds only for $\norm{\rho-\sigma}_1 \le \tfrac14$, the radius the chain was run at.
The obstruction is that $S^\ep_2(\rho)$ already sits a distance of order $\sqrt{C}$ from $S_{vN}(\rho)$ at $\delta = 0$, so any chain routed through it inherits that gap.
Theorem \ref{thm:continuity} therefore has to be proved directly on the spectra, as it was in section \ref{sec:continuity}, and is not a corollary of the flatness result.

With the sharpened window of section \ref{sec:dos} the $\ep$ dependence enters only through $\sqrt{\log(1/\ep)}$, so $\ep$ may be taken nonperturbatively small in $G_N$ and the half-width stays subextensive,  provided $\log(1/\ep) = o(S_{vN})$.
The half-width nevertheless grows, since at $\ep \sim e^{-\sqrt{S_{vN}}}$ the window is $\sqrt{2C\sqrt{S_{vN}}} \sim \sqrt{2}\, S_{vN}^{3/4}$, parametrically larger than the $\mO(\sqrt{S_{vN}})$ of \eqref{main_result}.
With the assumption-free window \eqref{window} that choice of $\ep$ gives $W_\ep \sim \sqrt{2C}\,e^{\sqrt{S_{vN}}/2}$, which exceeds $S_{vN}$ and leaves \eqref{ceiling_bound} vacuous, so the nonperturbative $\ep$ relies on holographic structure.


Neither of the constants proved above is optimal.
Optimal smoothing removes the largest eigenvalues carrying weight $\ep/2$, so on a spectrum Gaussian near its peak the deficit approaches the Gaussian quantile at level $\ep$,
\begin{equation}
    S_{vN} - S^\ep_\alpha(\rho) \;\longrightarrow\; \Phi^{-1}(1 - \ep/2) \, \sqrt{C} ,
    \label{sharp_constant}
\end{equation}
at each fixed $\alpha > 1$, since the truncated spectrum has $S_\alpha$ exceeding $-\log\lambda_{\max}$ by only $\mO(\log C)$, so the deficit is the position of the cut up to subleading terms.
The window \eqref{window} overshoots this by $\sqrt{2/\ep - 1}\,/\,\Phi^{-1}(1-\ep/2)$ and the sub-Gaussian half-width by $\sqrt{2\log(4/\ep)}\,/\,\Phi^{-1}(1-\ep/2)$.
The second tends to one as $\ep \to 0$, so the leading-order form \eqref{renyi_leading_order} delivers a half-width that is asymptotically optimal.
The first grows without bound, so no estimate from two moments alone can reach the optimum.
This is a limit rather than a bound, approached slowly from below, and establishing it here would need control of the density of states beyond what section \ref{sec:dos} provides.
It is however established in the i.i.d. setting.
Corollary 4 of \cite{NuradhaWilde2023Fidelity} gives, for every $\alpha > 1$, the second-order expansion of the smooth sandwiched \Renyi relative entropy, which at reference operator $\mbI$ reads
\begin{equation}
    S^\ep_\alpha(\rho^{\otimes n}) \;=\; n S_{vN} + \sqrt{nC}\,\Phi^{-1}(\ep) + \mO(\log n) ,
\end{equation}
independently of the index.
The classical counterpart at $0 < \alpha < 1$ is theorem 1 of \cite{SakaiTan2020Smooth}.
That is \eqref{sharp_constant} with $n$ in place of $1/G_N$, up to the translation between a fidelity ball of subnormalized states and the trace-norm ball \eqref{ball_def}.
Nothing elsewhere in the paper depends on \eqref{sharp_constant}, which is recorded as the value an optimal smoothing would achieve and as the point of contact with the asymptotic literature.

\subsection{Preparation noise cannot perform the smoothing}
\label{sec:noise}

Since every experimental run prepares some state in the ball, it is natural to ask whether state preparation noise performs the smoothing by itself.
Two natural mechanisms do not.
Any unitary supported inside the tested region, or factorizing across the entangling cut, leaves the reduced state isospectral to $\rho$, so coherent errors of that kind cannot smooth at all, smoothing being a spectral operation.
Noise that mixes in a contaminant is controlled by the following bounds.
Neither class exhausts what a preparation can do.
Channels outside them, in particular local noise acting independently on many sites and coherent errors that straddle the entangling surface, are left open.
Engineered filtering, which would smooth, is discussed in section \ref{sec:discussion}.

Write $\sigma = (1-p)\rho + p\,\omega$ for a mixture of that kind, with $\omega$ an arbitrary state and $p \in (0,1)$.
Since $\sigma \ge (1-p)\rho$ as operators, \eqref{ineq_weyl} gives
\begin{equation}
    \lambda^\downarrow_i(\sigma) \;\ge\; (1-p)\lambda^\downarrow_i(\rho) \quad\text{for every } i,
    \qquad\text{whence}\qquad
    \Tr\sigma^\alpha \;\ge\; (1-p)^\alpha \Tr\rho^\alpha
\end{equation}
for $\alpha \ge 1$, and taking $\tfrac{1}{1-\alpha}\log$ of both sides reverses the inequality to give, at every $\alpha > 1$,
\begin{equation}
    S_\alpha(\sigma) \;\le\; S_\alpha(\rho) + \frac{\alpha}{\alpha-1}\log\frac{1}{1-p} .
    \label{noise_upper}
\end{equation}
In the other direction the convexity \eqref{ineq_convex} gives
\begin{equation}
    \Tr\sigma^\alpha \;\le\; (1-p)\Tr\rho^\alpha + p\Tr\omega^\alpha .
\end{equation}
If $S_\alpha(\omega) \ge S_\alpha(\rho) - \Delta_\alpha$ for some $\Delta_\alpha \ge 0$, which reads $\Tr\omega^\alpha \le e^{(\alpha-1)\Delta_\alpha}\Tr\rho^\alpha$, then
\begin{equation}
    S_\alpha(\sigma) \;\ge\; S_\alpha(\rho) - \frac{1}{\alpha-1}\log\lp 1 + p \lp e^{(\alpha-1)\Delta_\alpha} - 1 \rp \rp .
    \label{noise_lower}
\end{equation}
Reaching the optimizer of \eqref{smooth_entropy_def} requires raising $S_2$ by an amount extensive in $c$, whereas \eqref{noise_upper} caps the rise at $\tfrac{\alpha}{\alpha-1}\log\tfrac{1}{1-p}$.
That cap is $\mO(1)$ only for $p$ bounded away from one, which ball membership alone does not deliver, since $\rho - \sigma = p(\rho-\omega)$ constrains the product $p\norm{\rho-\omega}_1$ rather than $p$ itself.
A separation $\norm{\rho-\omega}_1 \ge \kappa$ between target and contaminant restores it, forcing $p \le \ep/\kappa$ and capping the rise at $\mO(\ep/\kappa)$, and global depolarizing noise supplies such a $\kappa$ by replacing $\rho$ on the tested region with something macroscopically different from it. A Pauli error supported inside the region supplies none, since it acts there as a unitary and leaves the reduced state isospectral, which is the case already disposed of above.
Under that assumption the measured moments of a noisy preparation are those of a state whose purity differs from that of $\rho$ by $\mO(1)$ factors, while the smoothing optimum occupies a corner of the ball in which the purity is suppressed by $e^{-\mO(c)}$, so the optimization \eqref{smooth_entropy_def} has to be solved once analytically, as it is in theorem \ref{thm:ceiling}, rather than run on the data.
The lower bound \eqref{noise_lower} controls the opposite direction, where the damage a contaminant does is governed by $\Delta_\alpha$, by how much \textit{less entangled} it is on the tested region than the target, and not by how close to pure it is.
A flat contaminant of rank $e^{\beta S_{vN}}$ is exponentially far from pure and still dominates the measured moment at index $\alpha$ whenever $\beta$ falls below $S_\alpha(\rho)/S_{vN}$, which for a vacuum interval at $\beta = 5/8$ means every $\alpha < 4$.

\section{Discussion}
\label{sec:discussion}

The capacity of entanglement controls two distinct things.
The first is the von Neumann entropy itself.
The Ryu-Takayanagi formula assigns an entropy to a geometry, and how sharply it does so is the question theorem \ref{thm:continuity} answers.
Fannes-Audenaert answers it with the dimension of the regulated Hilbert space, which is a property of the cutoff and not of the bulk, and which diverges in the continuum however close the two states are.
Theorem \ref{thm:continuity} answers it with the capacity, whose square root is the spread of the modular energy.
For a holographic state the fixed-area decomposition $\rho \simeq \bigoplus_A p_A \rho_A$ with each $\rho_A$ flat gives $E = A/4G_N - \log p_A$, so at leading order in $G_N$ that spread is the spread of the area itself, the contribution of $-\log p_A$ being subleading for a smooth area distribution \cite{DeBoer:2018kvc, Dong:2018seb}.
What the theorem says is that the precision with which a geometry fixes an entropy is set by how much its minimal surface fluctuates.

Both capacities enter \eqref{capacity_continuity}, so the regimes below are statements about pairs of states rather than about one state, and the ball as a whole is not among them. Section \ref{sec:continuity} settles what the ball costs, in terms of a threshold on the capacity of the contaminant.
Three regimes follow.
For two states of vanishing capacity, fixed-area states among them, \eqref{capacity_continuity} collapses to $-\log(1-\delta/2)$, which is of order $\delta$ and carries no dimension at all, so the entropy is Lipschitz across that family even in the continuum. The nested flat states of section \ref{sec:continuity} saturate this exactly.
For two states of capacity $\mO(S_{vN})$, holographic states among them, the precision is $\mO(\sqrt{\delta S_{vN}})$, which is subextensive, so a leading-order statement about the geometry survives a perturbation within that class.
Where two minimal surfaces compete the area spread is itself $\mO(S_{vN})$, the capacity is $C \gtrsim S_{vN}^2$, and the precision degrades to $\mO(\sqrt{\delta}\,S_{vN})$, which is extensive.
The dictionary is sharpest where the surface is rigid and fails where it is degenerate, and it degrades in proportion to the area spread of the superposition in between. That is a reason to prefer the fixed-area basis for holographic entropy statements which is independent of the observation below that fixed-area states need no smoothing.

The second is the smooth \Renyi spectrum, and there the area has to mean something for a single state.
The operational content of the von Neumann entropy, as a compression rate or a dilution rate, is defined on asymptotically many copies, and a semiclassical geometry describes one state, with no tensor power of it available.
Theorem \ref{thm:ceiling} supplies the reason the distinction does not matter at large central charge.
Under the single-saddle condition of corollary \ref{cor:flatholo}, every smooth \Renyi entropy at fixed $\alpha > 1$ lies within $\mO(\sqrt{S_{vN}})$ of $S_{vN}$, and the smooth min- and max-entropies with it.
The area is therefore not merely the von Neumann entropy of the boundary state.
To leading order in $1/G_N$ it is every one-shot entropy of that state at once, and the geometry does not distinguish them.

The mechanism is the one recorded in section \ref{sec:dos}.
The low cumulants of the modular energy scale as $1/G_N$, which is the structure a sum of $1/G_N$ independent variables has, so the semiclassical limit shares the concentration rate of an asymptotic limit with the inverse Newton constant in place of the copy number. One feature of the i.i.d. case is not shared, and it is the reason theorem \ref{thm:continuity} is needed at all. For $n$ copies $\log d^{\otimes n}$ also grows like $n$, so a dimensional continuity bound is already adequate per copy. In a continuum theory $\log d / S_{vN}$ diverges as the regulator is removed, and no dimensional bound survives it.
That is why holography can describe a single state with an asymptotic information measure. Equation \eqref{aep} states the comparison with the asymptotic equipartition property directly.
The same exchange of saddles that breaks the holographic reading of theorem \ref{thm:continuity} breaks this one, since a bimodal modular distribution has an extensive window and pins no one-shot entropy to the area.
Both halves of the paper therefore fail in the same place, at an entanglement transition, and for the same reason, as the failure mode recorded below sets out.
The bound is two-sided at every index and consumes only the mean and variance of the modular energy.
For holographic states the half-width tightens to the sub-Gaussian form of proposition \ref{prop:subgauss}, which holds whenever $\log(1/\ep)$ is small compared with $S_{vN}$, and that is the range in which the flatness statement carries content.
The collapse identifies a single subextensive band containing both $S_{vN}$ and every smooth \Renyi entropy at fixed index, and it caps from above what preparation imperfection can do to any measured moment.
It does not place the bare measured quantities in that band, and cannot, since the bare \Renyitwo entropy of a vacuum interval sits a distance $S_{vN}/4$ away by \eqref{cardy_renyi}.
In that operational sense theorem \ref{thm:ceiling} is one-sided, though the inequality itself is two-sided.
Its ceiling holds for every state in the ball and therefore does bound what an apparatus can be shown, while its floor is exhibited by the single constructed state $\mu$ of lemma \ref{lem:typical}(iv) and says nothing about the state an apparatus holds.
The same metric ball carries two distinct readings that should not be interchanged, one-shot task smoothing, in which a nearby state is selected as part of a protocol with a tolerated failure probability, and preparation uncertainty, in which the physical state is unknown within a confidence region. The smooth entropy is an optimized functional over the ball rather than anything measured, and it is the second reading that the ceiling serves.

\paragraph{Fixed-area states.}
In fixed-area eigenstates of the gravitational constraints the bare \Renyi spectrum is already flat at leading order \cite{Akers:2018fow, Dong:2018seb}, so for such states no smoothing step is needed and the bare \Renyitwo entropy already stands in for $S_{vN}$.
Generic holographic states are superpositions over areas, and the analysis of section \ref{sec:flatness} is what undoes the spread induced by that superposition, in line with the interpretation of tensor networks as fixed-area backgrounds \cite{Bao:2018pvs, Akers:2018fow, Dong:2018seb}.

\paragraph{The entropy cone.}
The cone is the sharpest state-independent signature of a semiclassical bulk that is checkable on finitely many numbers. Its inequalities hold for every geometric state and fail for entanglement patterns, GHZ-type correlations among three or more parties, that no classical bulk can produce, so a measured entropy vector lying outside it would rule out a geometric dual from boundary data alone.
The application that motivates the flatness result is a finite-resource test for a semiclassical bulk dual, run through the cone \cite{Bao:2015bfa}. It would assemble a proxy vector $\hat{S} = (\hat{S}(I))_I$ over the subsets $I$ of a partition and ask whether it lies in the cone, a vector outside the cone being the informative outcome.
The conditional statement the present results support is that \textit{if} the state is holographic, \textit{then} by corollary \ref{cor:flatholo} every smooth \Renyi entropy of every subset agrees with $S_{vN}(\rho_I)$ up to $\mO(\sqrt{S_{vN}})$, and the proxy vector lies inside the cone up to subextensive displacements.

The contrapositive, that a proxy vector outside the cone by an extensive margin cannot have come from a holographic state, is the discriminator, and theorem \ref{thm:ceiling} makes it quantitative.

\begin{corollary}
    \label{cor:cone}
    Let $\sum_I q_I S(I) \ge 0$ be an inequality valid for every holographic entropy vector, and let $\rho$ be a state whose von Neumann entropy vector satisfies it, $\sum_I q_I S_{vN}(\rho_I) \ge 0$, and each of whose marginals satisfies the conditions of corollary \ref{cor:flatholo}. Then for every $\alpha > 1$ and every choice of radii $\ep_I \in (0,1)$,
    \begin{equation}
        \sum_I q_I S^{\ep_I}_\alpha(\rho_I) \;\ge\; -\sum_I \abs{q_I} \lb W_{\ep_I}(\rho_I) + \frac{\alpha}{\alpha-1}\log\frac{1}{1-\ep_I} \rb .
        \label{cone_robust}
    \end{equation}
\end{corollary}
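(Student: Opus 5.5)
The plan is to apply Theorem \ref{thm:ceiling} to each marginal $\rho_I$ separately, with its own radius $\ep_I$. That gives a two-sided bracket
\begin{equation*}
    S_{vN}(\rho_I) - W_{\ep_I}(\rho_I) \;\le\; S^{\ep_I}_\alpha(\rho_I) \;\le\; S_{vN}(\rho_I) + W_{\ep_I}(\rho_I) + \frac{\alpha}{\alpha-1}\log\frac{1}{1-\ep_I}.
\end{equation*}
Write $\Delta_I := S^{\ep_I}_\alpha(\rho_I) - S_{vN}(\rho_I)$. Since the lower bound carries no additive term, the bracket gives the uniform estimate
\begin{equation*}
    \abs{\Delta_I} \le W_{\ep_I}(\rho_I) + \frac{\alpha}{\alpha-1}\log\frac{1}{1-\ep_I},
\end{equation*}
which is valid at every $\alpha>1$ and every $\ep_I\in(0,1)$, because Theorem \ref{thm:ceiling} holds for arbitrary states with no holographic input.

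Next I would decompose the left side of \eqref{cone_robust} as $\sum_I q_I S_{vN}(\rho_I) + \sum_I q_I \Delta_I$. The first sum is nonnegative by the hypothesis that the von Neumann entropy vector satisfies the inequality. For the second sum I would split the terms by the sign of the coefficient. When $q_I>0$, only the floor is needed, giving $q_I\Delta_I \ge -q_I W_{\ep_I}$. When $q_I<0$, only the ceiling is needed, giving $q_I\Delta_I \ge -\abs{q_I}\bigl(W_{\ep_I} + \tfrac{\alpha}{\alpha-1}\log\tfrac{1}{1-\ep_I}\bigr)$. Summing the two cases and bounding the first by the second, since the added logarithm is nonnegative, yields \eqref{cone_robust}.

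The role of the hypothesis from Corollary \ref{cor:flatholo} is only interpretive. It guarantees that each $W_{\ep_I}(\rho_I)=\mO(\sqrt{S_{vN}})$, so that the right side is subextensive, but the inequality itself does not use it.

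The one step needing care is the bookkeeping of which side of Theorem \ref{thm:ceiling} each sign of $q_I$ consumes. I also need to check that the radii are independent: each smooth entropy is a separate optimization over its own ball $\Ball{\ep_I}{\rho_I}$, so no joint state is required. That is exactly why the bound holds for an arbitrary choice of the $\ep_I$. I expect no genuine obstacle beyond this.
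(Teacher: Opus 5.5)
Your proposal is correct and is essentially the paper's argument. The paper also applies the two-sided bracket of Theorem \ref{thm:ceiling} to each marginal to get $\abs{S^{\ep_I}_\alpha(\rho_I) - S_{vN}(\rho_I)} \le W_{\ep_I} + \frac{\alpha}{\alpha-1}\log\frac{1}{1-\ep_I}$, and then applies it termwise to the assumed $\sum_I q_I S_{vN}(\rho_I) \ge 0$. Your sign split is a small refinement on top of that, showing the logarithmic term is only needed when $q_I < 0$.
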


\noindent This follows from
\begin{equation}
    \abs{S^{\ep_I}_\alpha(\rho_I) - S_{vN}(\rho_I)} \;\le\; W_{\ep_I} + \frac{\alpha}{\alpha-1}\log\frac{1}{1-\ep_I}
\end{equation}
applied termwise to the assumed $\sum_I q_I S_{vN}(\rho_I) \ge 0$. The hypothesis is stated on the von Neumann entropies rather than inferred from geometricity, since a holographic entropy vector is a vector of areas over $4G_N$ and the exact entropies depart from it at $\mO(1)$ through the corrections discussed below.
Its right side is $\mO(\sqrt{S_{vN}})$ at fixed party number, the sum over subsets contributing a constant that grows combinatorially in the number of parties, so a smooth \Renyi vector violating a cone inequality by an extensive margin cannot come from a holographic state.
Note however that the corollary is not a test, since the proxy vector $\hat{S}$ is never constructed here. It bounds the apparent violation by controlling each entropy in the vector separately and adding the errors, which leaves estimation and certification as the two things still missing.
Qualifications of the cone itself would have to be absorbed into any decision rule.
It constrains entropies only at leading order in $1/G_N$, with quantum extremal surface corrections entering at $\mO(1)$ and capable of pushing entropy vectors outside it \cite{Faulkner:2013ana, Engelhardt:2014gca, Akers:2020pmf, Akers:2021lms}, so only violations extensive in $c$ are meaningful.
And beyond monogamy of mutual information, which holds covariantly \cite{Hayden:2011ag, Wall:2012uf}, the higher-party inequalities are established for static states, though evidence has accumulated that the RT and HRT cones coincide \cite{Czech:2019lps, Caginalp:2019mgu, Grado-White:2024gtx, Grado-White:2025jci, Grimaldi:2025jad, Grimaldi:2026lbq, Czech:2026zca}.

\paragraph{Estimation at finite copies.}
Converting the flatness result into an estimator of $S_{vN}$ from a handful of measured integer moments is the natural next step, and the natural route is a controlled continuation in the \Renyi index from the measured integers down to $\alpha = 1$.
Conditioning and certification both stand between that route and a usable protocol, and neither is settled here.
The first is conditioning.
Any extrapolation from indices $\alpha \ge 2$ to $\alpha = 1$ amplifies the error on its inputs, and the strategy it has to beat is the null one of quoting the bare \Renyitwo entropy and accepting the deficit \eqref{cardy_renyi}, so what matters is whether the amplified preparation and statistical errors can be held below that deficit at entropies a measurement can reach.
The exponential cost of the higher moments enters on the same side of that comparison.
The second is certification.
The flat contaminant described after \eqref{noise_lower} dominates the low-index moments while remaining a physically unremarkable preparation error, and by theorem \ref{thm:continuity} any state in the ball whose entropy differs from the target at leading order carries a correspondingly large capacity.

Theorem \ref{thm:continuity} makes part of this actionable. Its right side depends on the unknown state only through $C_\sigma$, and the capacity is a second cumulant of the same modular distribution the replica moments already sample, so it is estimable from the same data as the entropy. An anomalously large measured capacity is therefore a signal that the entropy estimate cannot be trusted, and a small one vindicates it, which makes the pair $(\hat{S}, \hat{C})$ self-certifying in a way $\hat{S}$ alone is not. What remains open is quantitative, whether the capacity can be estimated well enough, from the same moments and at the same cost, to make the criterion bite at achievable entropies.

\paragraph{A failure mode of the smoothness assumption.}
The assumption behind \eqref{renyi_leading_order}, that $s_\alpha$ is smooth in $\alpha$, carries a risk for multi-component regions, since for unions of regions the dominant replica saddle, equivalently the dominant cosmic-brane configuration, can exchange as a function of $\alpha$.
For two intervals this is visible already in the mutual-information transition, whose critical cross-ratio depends on the \Renyi index \cite{Headrick:2010zt, Faulkner:2013yia}.
Near such a transition the branch of $s_\alpha$ that dominates at integer $\alpha \ge 2$ need not be the branch controlling $\alpha \to 1$, and both branches are smooth at integer $\alpha$, so no predictive check detects the exchange.
Theorem \ref{thm:ceiling} is untouched by this, since it holds for an arbitrary state and assumes nothing about $s_\alpha$.
What an exchange at, or within $\mO(G_N)$ of, the identity index threatens is the holographic input $C = \mO(S_{vN})$ itself, and with it both corollary \ref{cor:flatholo} and the holographic reading of theorem \ref{thm:continuity}. Theorem \ref{thm:continuity} is a statement about arbitrary states and is untouched, but the conclusion drawn from it, that the ambiguity of $S_{vN}$ over a preparation ball is $\mO(\sqrt{\delta S_{vN}})$ and therefore subextensive, uses the same input and fails with it.
At such a point the modular spectrum spreads over a range of order $S_{vN}$, two peaks of comparable weight being the extreme case, so $C \gtrsim S_{vN}^2$ and the window $W_\ep$ becomes extensive. The collapse then genuinely fails rather than merely going unproved, since a low-energy peak of weight exceeding $\ep$ cannot be removed by an $\ep$-perturbation and pins the smooth \Renyi entropy near that peak.
Read forwards rather than as a caveat, this is a prediction. The capacity of entanglement diverges relative to the entropy at an entanglement transition, which is a sharp statement about holographic states that tensor-network models and two-interval computations can test independently of anything else in this paper.
The failure window in the action gap between the two saddles is $\mO(G_N)$ wide, so the configuration is fine-tuned. It is not exotic, since degenerate surfaces are what saturating a cone inequality tends to involve, so the discriminator of corollary \ref{cor:cone} is least trustworthy exactly where it would be used. That is a structural obstruction to any finite-resource cone test and it is independent of estimation and certification.
What the exchange threatens is the saddle-point density of states of section \ref{sec:dos}, whose Legendre transform needs smoothness at every index, and any continuation from integer moments to $\alpha = 1$, which would follow the wrong branch with an $\mO(c)$ error.
Proposition \ref{prop:subgauss} is insensitive to an exchange away from the identity index, since it evaluates $s_\alpha$ only on $[1-t_0, 1+t_0]$, and it fails only if the exchange occurs inside that neighborhood.
The danger is confined to configurations tuned near an entanglement transition and is mitigated by varying the geometry, since an exchange point moves with the cross-ratios while a genuine physical effect should be robust.

\paragraph{Engineered smoothing.}
The bound \eqref{noise_upper} shows that preparation noise of mixture type cannot realize the smoothing optimum whenever the contaminant is bounded away from the target, $\norm{\rho-\omega}_1 \ge \kappa$, which is why the smoothing has to be solved analytically.
Without such a separation the statement is empty, since any state in the ball is the mixture at $p = 1$ and $\omega = \sigma$.
A hardware realization would require engineered modular-energy filtering, for example weak measurement of the subsystem modular energy followed by post-selection.
Extending \eqref{noise_upper} beyond mixture channels, in particular to noise acting independently at each site, is a natural next step.

\paragraph{Beyond leading order.}
The window width $\mO(\sqrt{S_{vN}})$ is a floor on the accuracy of any estimator built on the flatness result, and it is far above the $\mO(1)$ scale at which quantum corrections to the RT formula live.
Reaching that scale would require estimators beating the smoothing floor, presumably built on the one-shot quantities that govern holographic entropies at that order \cite{Akers:2020pmf}, and appears difficult.

\paragraph{Outlook.}
The regime where these bounds have content lies far above the platforms where \Renyi moments are measured routinely \cite{Islam:2015mom, Linke:2017xlv, Brydges:2019wut, Elben:2022jvo, Landsman:2018jpm} and above the numerical methods where replica moments are standard \cite{Hastings:2010zka, Humeniuk:2012xg}.
The results of this paper are therefore asymptotic statements about what is well posed, and closing that gap is a prerequisite for any near-term application.
The smooth \Renyitwo entropy is the quantity a measurement can bound from above, and by corollary \ref{cor:flatholo} it agrees with the RT area to $\mO(\sqrt{S_{vN}})$. One-shot information theory is therefore the right language for holographic entropy at large but finite $c$.

\acknowledgments

Ning Bao acknowledges support from Northeastern University and from the ASCR EXPRESS project
Quantum Transforms from Classical Transforms.
J.M. is supported by a graduate research assistantship from Northeastern University.
Claude models were used in the conduct of this research, for theorem development. The authors
are responsible for all statements in this paper.

\bibliographystyle{JHEP}
\bibliography{main}

@article{Abanin:2012jms,
    author = "Abanin, Dmitry A. and Demler, Eugene",
    title = "{Measuring Entanglement Entropy of a Generic Many-Body System with a Quantum Switch}",
    eprint = "1204.2819",
    archivePrefix = "arXiv",
    primaryClass = "cond-mat.mes-hall",
    doi = "10.1103/PhysRevLett.109.020504",
    journal = "Phys. Rev. Lett.",
    volume = "109",
    number = "2",
    pages = "020504",
    year = "2012"
}

@article{Akers:2018fow,
    author = "Akers, Chris and Rath, Pratik",
    title = "{Holographic Renyi Entropy from Quantum Error Correction}",
    eprint = "1811.05171",
    archivePrefix = "arXiv",
    primaryClass = "hep-th",
    doi = "10.1007/JHEP05(2019)052",
    journal = "JHEP",
    volume = "05",
    pages = "052",
    year = "2019"
}

@article{Akers:2020pmf,
    author = "Akers, Chris and Penington, Geoff",
    title = "{Leading order corrections to the quantum extremal surface prescription}",
    eprint = "2008.03319",
    archivePrefix = "arXiv",
    primaryClass = "hep-th",
    doi = "10.1007/JHEP04(2021)062",
    journal = "JHEP",
    volume = "04",
    pages = "062",
    year = "2021"
}

@article{Akers:2021lms,
    author = "Akers, Chris and Hern{\'a}ndez-Cuenca, Sergio and Rath, Pratik",
    title = "{Quantum Extremal Surfaces and the Holographic Entropy Cone}",
    eprint = "2108.07280",
    archivePrefix = "arXiv",
    primaryClass = "hep-th",
    doi = "10.1007/JHEP11(2021)177",
    journal = "JHEP",
    volume = "11",
    pages = "177",
    year = "2021"
}

@article{Akers:2023fqr,
    author = "Akers, Chris and Levine, Adam and Penington, Geoff and Wildenhain, Elizabeth",
    title = "{One-shot holography}",
    eprint = "2307.13032",
    archivePrefix = "arXiv",
    primaryClass = "hep-th",
    doi = "10.21468/SciPostPhys.16.6.144",
    journal = "SciPost Phys.",
    volume = "16",
    number = "6",
    pages = "144",
    year = "2024"
}

@article{Audenaert:2006vjl,
    author = "Audenaert, Koenraad M. R.",
    title = "{A sharp continuity estimate for the von Neumann entropy}",
    eprint = "quant-ph/0610146",
    archivePrefix = "arXiv",
    doi = "10.1088/1751-8113/40/28/S18",
    journal = "J. Phys. A",
    volume = "40",
    number = "28",
    pages = "8127",
    year = "2007"
}

@article{Audenaert2025ContinuityBounds,
    author = "Audenaert, Koenraad and Bergh, Bjarne and Datta, Nilanjana and Jabbour, Michael G. and Capel, \'{A}ngela and Gondolf, Paul",
    title = "{Continuity bounds for quantum entropies arising from a fundamental entropic inequality}",
    eprint = "2408.15306",
    archivePrefix = "arXiv",
    primaryClass = "quant-ph",
    doi = "10.1109/TIT.2025.3586478",
    journal = "IEEE Trans. Inf. Theor.",
    volume = "71",
    number = "9",
    pages = "7029--7038",
    year = "2025"
}

@article{Bao:2015bfa,
    author = "Bao, Ning and Nezami, Sepehr and Ooguri, Hirosi and Stoica, Bogdan and Sully, James and Walter, Michael",
    title = "{The Holographic Entropy Cone}",
    eprint = "1505.07839",
    archivePrefix = "arXiv",
    primaryClass = "hep-th",
    reportNumber = "CALT-TH-2015-020, IPMU15-0074, SLAC-PUB-16294, SU-ITP-15-08, CALT-TH 2015-020, IPMU15-0074, SLAC-PUB-16294, SU-ITP-15/08",
    doi = "10.1007/JHEP09(2015)130",
    journal = "JHEP",
    volume = "09",
    pages = "130",
    year = "2015"
}

@article{Bao:2018pvs,
    author = "Bao, Ning and Penington, Geoffrey and Sorce, Jonathan and Wall, Aron C.",
    title = "{Beyond Toy Models: Distilling Tensor Networks in Full AdS/CFT}",
    eprint = "1812.01171",
    archivePrefix = "arXiv",
    primaryClass = "hep-th",
    doi = "10.1007/JHEP11(2019)069",
    journal = "JHEP",
    volume = "11",
    pages = "069",
    year = "2019"
}

@article{Bao:2025plr,
    author = "Bao, Ning and Geng, Hao and Jiang, Yikun",
    title = "{Ryu-Takayanagi formula for multi-boundary black holes from 2D large-c CFT ensemble}",
    eprint = "2504.12388",
    archivePrefix = "arXiv",
    primaryClass = "hep-th",
    doi = "10.1007/JHEP10(2025)042",
    journal = "JHEP",
    volume = "10",
    pages = "042",
    year = "2025"
}

@article{Berta2025IntegralRepresentations,
    author = "Berta, Mario and Lami, Ludovico and Tomamichel, Marco",
    title = "{Continuity of entropies via integral representations}",
    eprint = "2408.15226",
    archivePrefix = "arXiv",
    primaryClass = "quant-ph",
    doi = "10.1109/TIT.2025.3527858",
    journal = "IEEE Trans. Inf. Theor.",
    volume = "71",
    number = "3",
    pages = "1896--1908",
    year = "2025"
}

@article{Boes:2020vpv,
    author = "Boes, Paul and Ng, Nelly H. Y. and Wilming, Henrik",
    title = "{The variance of relative surprisal as single-shot quantifier}",
    eprint = "2009.08391",
    archivePrefix = "arXiv",
    primaryClass = "quant-ph",
    doi = "10.1103/PRXQuantum.3.010325",
    journal = "PRX Quantum",
    volume = "3",
    number = "1",
    pages = "010325",
    year = "2022"
}

@article{Brydges:2019wut,
    author = "Brydges, Tiff and Elben, Andreas and Jurcevic, Petar and Vermersch, Beno{\^\i}t and Maier, Christine and Lanyon, Ben P. and Zoller, Peter and Blatt, Rainer and Roos, Christian F.",
    title = "{Probing R{\'e}nyi entanglement entropy via randomized measurements}",
    eprint = "1806.05747",
    archivePrefix = "arXiv",
    primaryClass = "quant-ph",
    doi = "10.1126/science.aau4963",
    journal = "Science",
    volume = "364",
    number = "6437",
    pages = "aau4963",
    year = "2019"
}

@article{Caginalp:2019mgu,
    author = "Caginalp, Reginald J.",
    title = "{Holographic entropy cone in AdS-Vaidya spacetimes}",
    eprint = "1905.00544",
    archivePrefix = "arXiv",
    primaryClass = "hep-th",
    doi = "10.1103/PhysRevD.101.026010",
    journal = "Phys. Rev. D",
    volume = "101",
    number = "2",
    pages = "026010",
    year = "2020"
}

@article{Calabrese:2004eu,
    author = "Calabrese, Pasquale and Cardy, John L.",
    title = "{Entanglement entropy and quantum field theory}",
    eprint = "hep-th/0405152",
    archivePrefix = "arXiv",
    doi = "10.1088/1742-5468/2004/06/P06002",
    journal = "J. Stat. Mech.",
    volume = "0406",
    pages = "P06002",
    year = "2004"
}

@article{Cardy:2011zz,
    author = "Cardy, John",
    title = "{Measuring Entanglement Using Quantum Quenches}",
    eprint = "1012.5116",
    archivePrefix = "arXiv",
    primaryClass = "cond-mat.stat-mech",
    reportNumber = "NSF-KITP-10-164",
    doi = "10.1103/PhysRevLett.106.150404",
    journal = "Phys. Rev. Lett.",
    volume = "106",
    pages = "150404",
    year = "2011"
}

@article{Czech:2014tva,
    author = "Czech, Bartlomiej and Hayden, Patrick and Lashkari, Nima and Swingle, Brian",
    title = "{The Information Theoretic Interpretation of the Length of a Curve}",
    eprint = "1410.1540",
    archivePrefix = "arXiv",
    primaryClass = "hep-th",
    doi = "10.1007/JHEP06(2015)157",
    journal = "JHEP",
    volume = "06",
    pages = "157",
    year = "2015"
}

@article{Czech:2019lps,
    author = "Czech, Bartlomiej and Dong, Xi",
    title = "{Holographic Entropy Cone with Time Dependence in Two Dimensions}",
    eprint = "1905.03787",
    archivePrefix = "arXiv",
    primaryClass = "hep-th",
    doi = "10.1007/JHEP10(2019)177",
    journal = "JHEP",
    volume = "10",
    pages = "177",
    year = "2019"
}

@article{Czech:2026zca,
    author = "Czech, Bartlomiej and Feng, Yichen and Wu, Xianlai and Xie, Minjun",
    title = "{Holographic entropy inequalities pass the majorization test}",
    eprint = "2601.09989",
    archivePrefix = "arXiv",
    primaryClass = "hep-th",
    month = "1",
    year = "2026"
}

@article{Daley:2012xhf,
    author = "Daley, A. J. and Pichler, H. and Schachenmayer, J. and Zoller, P.",
    title = "{Measuring Entanglement Growth in Quench Dynamics of Bosons in an Optical Lattice}",
    eprint = "1205.1521",
    archivePrefix = "arXiv",
    primaryClass = "cond-mat.quant-gas",
    doi = "10.1103/PhysRevLett.109.020505",
    journal = "Phys. Rev. Lett.",
    volume = "109",
    number = "2",
    pages = "020505",
    year = "2012"
}

@article{DeBoer:2018kvc,
    author = {De Boer, Jan and J{\"a}rvel{\"a}, Jarkko and Keski-Vakkuri, Esko},
    title = "{Aspects of capacity of entanglement}",
    eprint = "1807.07357",
    archivePrefix = "arXiv",
    primaryClass = "hep-th",
    reportNumber = "HIP-2018-25/TH",
    doi = "10.1103/PhysRevD.99.066012",
    journal = "Phys. Rev. D",
    volume = "99",
    number = "6",
    pages = "066012",
    year = "2019"
}

@article{Dong:2016fnf,
    author = "Dong, Xi",
    title = "{The Gravity Dual of Renyi Entropy}",
    eprint = "1601.06788",
    archivePrefix = "arXiv",
    primaryClass = "hep-th",
    reportNumber = "SU-ITP-16/01, SU-ITP-16-01",
    doi = "10.1038/ncomms12472",
    journal = "Nature Commun.",
    volume = "7",
    pages = "12472",
    year = "2016"
}

@article{Dong:2016hjy,
    author = "Dong, Xi and Lewkowycz, Aitor and Rangamani, Mukund",
    title = "{Deriving covariant holographic entanglement}",
    eprint = "1607.07506",
    archivePrefix = "arXiv",
    primaryClass = "hep-th",
    doi = "10.1007/JHEP11(2016)028",
    journal = "JHEP",
    volume = "11",
    pages = "028",
    year = "2016"
}

@article{Dong:2018seb,
    author = "Dong, Xi and Harlow, Daniel and Marolf, Donald",
    title = "{Flat entanglement spectra in fixed-area states of quantum gravity}",
    eprint = "1811.05382",
    archivePrefix = "arXiv",
    primaryClass = "hep-th",
    doi = "10.1007/JHEP10(2019)240",
    journal = "JHEP",
    volume = "10",
    pages = "240",
    year = "2019"
}

@article{Ekert:2002qtj,
    author = "Ekert, Artur K. and Alves, Carolina Moura and Oi, Daniel K. L. and Horodecki, Micha{\l} and Horodecki, Pawe{\l} and Kwek, L. C.",
    title = "{Direct Estimations of Linear and Nonlinear Functionals of a Quantum State}",
    eprint = "quant-ph/0203016",
    archivePrefix = "arXiv",
    doi = "10.1103/PhysRevLett.88.217901",
    journal = "Phys. Rev. Lett.",
    volume = "88",
    number = "21",
    pages = "217901",
    year = "2002"
}

@article{Elben:2022jvo,
    author = "Elben, Andreas and Flammia, Steven T. and Huang, Hsin-Yuan and Kueng, Richard and Preskill, John and Vermersch, Beno{\^\i}t and Zoller, Peter",
    title = "{The randomized measurement toolbox}",
    eprint = "2203.11374",
    archivePrefix = "arXiv",
    primaryClass = "quant-ph",
    doi = "10.1038/s42254-022-00535-2",
    journal = "Nature Rev. Phys.",
    volume = "5",
    number = "1",
    pages = "9--24",
    year = "2023"
}

@article{Engelhardt:2014gca,
    author = "Engelhardt, Netta and Wall, Aron C.",
    title = "{Quantum Extremal Surfaces: Holographic Entanglement Entropy beyond the Classical Regime}",
    eprint = "1408.3203",
    archivePrefix = "arXiv",
    primaryClass = "hep-th",
    doi = "10.1007/JHEP01(2015)073",
    journal = "JHEP",
    volume = "01",
    pages = "073",
    year = "2015"
}

@article{Faulkner:2013ana,
    author = "Faulkner, Thomas and Lewkowycz, Aitor and Maldacena, Juan",
    title = "{Quantum corrections to holographic entanglement entropy}",
    eprint = "1307.2892",
    archivePrefix = "arXiv",
    primaryClass = "hep-th",
    doi = "10.1007/JHEP11(2013)074",
    journal = "JHEP",
    volume = "11",
    pages = "074",
    year = "2013"
}

@article{Faulkner:2013yia,
    author = "Faulkner, Thomas",
    title = "{The Entanglement Renyi Entropies of Disjoint Intervals in AdS/CFT}",
    eprint = "1303.7221",
    archivePrefix = "arXiv",
    primaryClass = "hep-th",
    month = "3",
    year = "2013"
}

@article{Grado-White:2024gtx,
    author = "Grado-White, Brianna and Grimaldi, Guglielmo and Headrick, Matthew and Hubeny, Veronika E.",
    title = "{Testing holographic entropy inequalities in 2 + 1 dimensions}",
    eprint = "2407.07165",
    archivePrefix = "arXiv",
    primaryClass = "hep-th",
    reportNumber = "BRX-TH-6721",
    doi = "10.1007/JHEP01(2025)065",
    journal = "JHEP",
    volume = "01",
    pages = "065",
    year = "2025"
}

@article{Grado-White:2025jci,
    author = "Grado-White, Brianna and Grimaldi, Guglielmo and Headrick, Matthew and Hubeny, Veronika E.",
    title = "{Minimax surfaces and the holographic entropy cone}",
    eprint = "2502.09894",
    archivePrefix = "arXiv",
    primaryClass = "hep-th",
    doi = "10.1007/JHEP05(2025)104",
    journal = "JHEP",
    volume = "05",
    pages = "104",
    year = "2025"
}

@article{Grimaldi:2025jad,
    author = "Grimaldi, Guglielmo and Headrick, Matthew and Hubeny, Veronika E.",
    title = "{A new characterization of the holographic entropy cone}",
    eprint = "2508.21823",
    archivePrefix = "arXiv",
    primaryClass = "hep-th",
    doi = "10.21468/SciPostPhys.20.4.122",
    journal = "SciPost Phys.",
    volume = "20",
    number = "4",
    pages = "122",
    year = "2026"
}

@article{Grimaldi:2026lbq,
    author = "Grimaldi, Guglielmo and Headrick, Matthew and Hubeny, Veronika E. and Shteyner, Pavel",
    title = "{Combinatorial properties of holographic entropy inequalities}",
    eprint = "2601.09987",
    archivePrefix = "arXiv",
    primaryClass = "hep-th",
    month = "1",
    year = "2026"
}

@article{Hastings:2010zka,
    author = "Hastings, Matthew B. and Gonz{\'a}lez, Iv{\'a}n and Kallin, Ann B. and Melko, Roger G.",
    title = "{Measuring Renyi Entanglement Entropy in Quantum Monte Carlo Simulations}",
    eprint = "1001.2335",
    archivePrefix = "arXiv",
    primaryClass = "cond-mat.str-el",
    doi = "10.1103/PhysRevLett.104.157201",
    journal = "Phys. Rev. Lett.",
    volume = "104",
    number = "15",
    pages = "157201",
    year = "2010"
}

@article{Hayden:2011ag,
    author = "Hayden, Patrick and Headrick, Matthew and Maloney, Alexander",
    title = "{Holographic Mutual Information is Monogamous}",
    eprint = "1107.2940",
    archivePrefix = "arXiv",
    primaryClass = "hep-th",
    reportNumber = "BRX-TH-638, BRX-TH-638",
    doi = "10.1103/PhysRevD.87.046003",
    journal = "Phys. Rev. D",
    volume = "87",
    number = "4",
    pages = "046003",
    year = "2013"
}

@unpublished{Hayden:unpublished,
    author = "Hayden, Patrick and Swingle, Brian and Walter, Michael",
    title = "{One-shot information theory in quantum field theory}",
    note = "unpublished"
}

@article{Headrick:2010zt,
    author = "Headrick, Matthew",
    title = "{Entanglement Renyi entropies in holographic theories}",
    eprint = "1006.0047",
    archivePrefix = "arXiv",
    primaryClass = "hep-th",
    reportNumber = "BRX-TH-619, BRX-TH 619",
    doi = "10.1103/PhysRevD.82.126010",
    journal = "Phys. Rev. D",
    volume = "82",
    pages = "126010",
    year = "2010"
}

@article{Holzhey:1994we,
    author = "Holzhey, Christoph and Larsen, Finn and Wilczek, Frank",
    title = "{Geometric and renormalized entropy in conformal field theory}",
    eprint = "hep-th/9403108",
    archivePrefix = "arXiv",
    reportNumber = "PUPT-1454, IASSNS-HEP-93-88",
    doi = "10.1016/0550-3213(94)90402-2",
    journal = "Nucl. Phys. B",
    volume = "424",
    pages = "443--467",
    year = "1994"
}

@article{Hubeny:2007xt,
    author = "Hubeny, Veronika E. and Rangamani, Mukund and Takayanagi, Tadashi",
    title = "{A Covariant holographic entanglement entropy proposal}",
    eprint = "0705.0016",
    archivePrefix = "arXiv",
    primaryClass = "hep-th",
    reportNumber = "DCPT-07-13, KUNS-2069",
    doi = "10.1088/1126-6708/2007/07/062",
    journal = "JHEP",
    volume = "07",
    pages = "062",
    year = "2007"
}

@article{Humeniuk:2012xg,
    author = "Humeniuk, Stephan and Roscilde, Tommaso",
    title = "{Quantum Monte Carlo calculation of entanglement Renyi entropies for generic quantum systems}",
    eprint = "1203.5752",
    archivePrefix = "arXiv",
    primaryClass = "cond-mat.str-el",
    doi = "10.1103/PhysRevB.86.235116",
    journal = "Phys. Rev. B",
    volume = "86",
    pages = "235116",
    year = "2012"
}

@article{Islam:2015mom,
    author = "Islam, Rajibul and Ma, Ruichao and Preiss, Philipp M. and Tai, M. Eric and Lukin, Alexander and Rispoli, Matthew and Greiner, Markus",
    title = "{Measuring entanglement entropy through the interference of quantum many-body twins}",
    eprint = "1509.01160",
    archivePrefix = "arXiv",
    primaryClass = "cond-mat.quant-gas",
    doi = "10.1038/nature15750",
    month = "9",
    year = "2015"
}

@article{Jasser2026Antiflatness,
    author = "Jasser, Barbara and Iannotti, Daniele and Hamma, Alioscia",
    title = "{A journey through Flatland: What does the antiflatness of a spectrum teach us?}",
    eprint = "2605.21664",
    archivePrefix = "arXiv",
    primaryClass = "quant-ph",
    month = "5",
    year = "2026"
}

@article{Koenig:2009avh,
    author = "Koenig, Robert and Renner, Renato and Schaffner, Christian",
    title = "{The Operational Meaning of Min- and Max-Entropy}",
    eprint = "0807.1338",
    archivePrefix = "arXiv",
    primaryClass = "quant-ph",
    doi = "10.1109/TIT.2009.2025545",
    journal = "IEEE Trans. Info. Theor.",
    volume = "55",
    number = "9",
    pages = "4337--4347",
    year = "2009"
}

@article{Landsman:2018jpm,
    author = "Landsman, Kevin A. and Figgatt, Caroline and Schuster, Thomas and Linke, Norbert M. and Yoshida, Beni and Yao, Norman Y. and Monroe, Christopher",
    title = "{Verified Quantum Information Scrambling}",
    eprint = "1806.02807",
    archivePrefix = "arXiv",
    primaryClass = "quant-ph",
    doi = "10.1038/s41586-019-0952-6",
    journal = "Nature",
    volume = "567",
    number = "7746",
    pages = "61--65",
    year = "2019"
}

@article{Lewkowycz:2013nqa,
    author = "Lewkowycz, Aitor and Maldacena, Juan",
    title = "{Generalized gravitational entropy}",
    eprint = "1304.4926",
    archivePrefix = "arXiv",
    primaryClass = "hep-th",
    doi = "10.1007/JHEP08(2013)090",
    journal = "JHEP",
    volume = "08",
    pages = "090",
    year = "2013"
}

@article{Linden:2012kdb,
    author = "Linden, Noah and Mosonyi, Mil{\'a}n and Winter, Andreas",
    title = "{The structure of Renyi entropic inequalities}",
    eprint = "1212.0248",
    archivePrefix = "arXiv",
    primaryClass = "quant-ph",
    doi = "10.1098/rspa.2012.0737",
    journal = "Proc. Roy. Soc. Lond. A",
    volume = "469",
    number = "2158",
    pages = "20120737",
    year = "2013"
}

@article{Linke:2017xlv,
    author = "Linke, Norbert M. and Johri, Sonika and Figgatt, Caroline and Landsman, Kevin A. and Matsuura, Anne Y. and Monroe, Christopher",
    title = "{Measuring the R{\'e}nyi entropy of a two-site Fermi-Hubbard model on a trapped ion quantum computer}",
    eprint = "1712.08581",
    archivePrefix = "arXiv",
    primaryClass = "quant-ph",
    doi = "10.1103/PhysRevA.98.052334",
    journal = "Phys. Rev. A",
    volume = "98",
    number = "5",
    pages = "052334",
    year = "2018"
}

@article{Maldacena:1997re,
    author = "Maldacena, Juan Martin",
    title = "{The Large $N$ limit of superconformal field theories and supergravity}",
    eprint = "hep-th/9711200",
    archivePrefix = "arXiv",
    reportNumber = "HUTP-97-A097, HUTP-98-A097",
    doi = "10.4310/ATMP.1998.v2.n2.a1",
    journal = "Adv. Theor. Math. Phys.",
    volume = "2",
    pages = "231--252",
    year = "1998"
}

@article{Mirsky1960SymmetricGauge,
    author = "Mirsky, Leon",
    title = "{Symmetric gauge functions and unitarily invariant norms}",
    journal = "Quart. J. Math. Oxford Ser. (2)",
    volume = "11",
    pages = "50--59",
    doi = "10.1093/qmath/11.1.50",
    year = "1960"
}

@article{Mukhametzhanov:2019pzy,
    author = "Mukhametzhanov, Baur and Zhiboedov, Alexander",
    title = "{Modular invariance, tauberian theorems and microcanonical entropy}",
    eprint = "1904.06359",
    archivePrefix = "arXiv",
    primaryClass = "hep-th",
    reportNumber = "CERN-TH-2019-043",
    doi = "10.1007/JHEP10(2019)261",
    journal = "JHEP",
    volume = "10",
    pages = "261",
    year = "2019"
}

@inproceedings{Renner2004SmoothRE,
    author = "Renner, Renato and Wolf, Stefan",
    title = "{Smooth R\'enyi entropy and applications}",
    booktitle = "{Proceedings of the 2004 IEEE International Symposium on Information Theory (ISIT)}",
    pages = "232",
    doi = "10.1109/ISIT.2004.1365269",
    publisher = "IEEE",
    year = "2004"
}

@phdthesis{Renner:2005qbj,
    author = "Renner, Renato",
    title = "{Security of Quantum Key Distribution}",
    eprint = "quant-ph/0512258",
    archivePrefix = "arXiv",
    reportNumber = "Diss. ETH No. 16242, Diss. ETH No. 16242",
    school = "Zurich, ETH",
    year = "2005"
}

@article{Ryu:2006bv,
    author = "Ryu, Shinsei and Takayanagi, Tadashi",
    title = "{Holographic derivation of entanglement entropy from AdS/CFT}",
    eprint = "hep-th/0603001",
    archivePrefix = "arXiv",
    reportNumber = "NSF-KITP-06-11, NSF-KITP-06-11",
    doi = "10.1103/PhysRevLett.96.181602",
    journal = "Phys. Rev. Lett.",
    volume = "96",
    pages = "181602",
    year = "2006"
}

@article{Ryu:2006ef,
    author = "Ryu, Shinsei and Takayanagi, Tadashi",
    title = "{Aspects of Holographic Entanglement Entropy}",
    eprint = "hep-th/0605073",
    archivePrefix = "arXiv",
    reportNumber = "NSF-KITP-06-31, KUNS-2021, NSF-KITP-06-31, KUNS-2021",
    doi = "10.1088/1126-6708/2006/08/045",
    journal = "JHEP",
    volume = "08",
    pages = "045",
    year = "2006"
}

@article{Tomamichel:2009ohu,
    author = "Tomamichel, Marco and Colbeck, Roger and Renner, Renato",
    title = "{A Fully Quantum Asymptotic Equipartition Property}",
    eprint = "0811.1221",
    archivePrefix = "arXiv",
    primaryClass = "quant-ph",
    doi = "10.1109/tit.2009.2032797",
    journal = "IEEE Trans. Info. Theor.",
    volume = "55",
    number = "12",
    pages = "5840--5847",
    year = "2009"
}

@book{Tomamichel:2015gtd,
    author = "Tomamichel, Marco",
    title = "{Quantum Information Processing with Finite Resources. Mathematical Foundations}",
    eprint = "1504.00233",
    archivePrefix = "arXiv",
    primaryClass = "quant-ph",
    doi = "10.1007/978-3-319-21891-5",
    isbn = "978-3-319-21890-8, 978-3-319-21891-5",
    publisher = "Springer",
    series = "SpringerBriefs in Mathematical Physics",
    volume = "5",
    year = "2016"
}

@article{vanEnk:2011xlo,
    author = "van Enk, S. J. and Beenakker, C. W. J.",
    title = "{Measuring Tr{\ensuremath{\rho^n}} on Single Copies of {\ensuremath{\rho}} Using Random Measurements}",
    eprint = "1112.1027",
    archivePrefix = "arXiv",
    primaryClass = "quant-ph",
    doi = "10.1103/PhysRevLett.108.110503",
    journal = "Phys. Rev. Lett.",
    volume = "108",
    number = "11",
    pages = "110503",
    year = "2012"
}

@article{Wall:2012uf,
    author = "Wall, Aron C.",
    title = "{Maximin Surfaces, and the Strong Subadditivity of the Covariant Holographic Entanglement Entropy}",
    eprint = "1211.3494",
    archivePrefix = "arXiv",
    primaryClass = "hep-th",
    doi = "10.1088/0264-9381/31/22/225007",
    journal = "Class. Quant. Grav.",
    volume = "31",
    number = "22",
    pages = "225007",
    year = "2014"
}

@article{Wang:2021ptw,
    author = "Wang, Jinzhao",
    title = "{The refined quantum extremal surface prescription from the asymptotic equipartition property}",
    eprint = "2105.05892",
    archivePrefix = "arXiv",
    primaryClass = "hep-th",
    doi = "10.22331/q-2022-02-16-655",
    journal = "Quantum",
    volume = "6",
    pages = "655",
    year = "2022"
}

@article{Wilming:2018rvz,
    author = "Wilming, Henrik and Eisert, Jens",
    title = "{Single-shot holographic compression from the area law}",
    eprint = "1809.10156",
    archivePrefix = "arXiv",
    primaryClass = "quant-ph",
    doi = "10.1103/PhysRevLett.122.190501",
    journal = "Phys. Rev. Lett.",
    volume = "122",
    number = "19",
    pages = "190501",
    year = "2019"
}

@article{Winter:2015qhs,
    author = "Winter, Andreas",
    title = "{Tight Uniform Continuity Bounds for Quantum Entropies: Conditional Entropy, Relative Entropy Distance and Energy Constraints}",
    eprint = "1507.07775",
    archivePrefix = "arXiv",
    primaryClass = "quant-ph",
    doi = "10.1007/s00220-016-2609-8",
    journal = "Commun. Math. Phys.",
    volume = "347",
    number = "1",
    pages = "291--313",
    year = "2016"
}

@article{CalabreseLefevre2008Spectrum,
    author = "Calabrese, Pasquale and Lefevre, Alexandre",
    title = "{Entanglement spectrum in one-dimensional systems}",
    eprint = "0806.3059",
    archivePrefix = "arXiv",
    primaryClass = "cond-mat.stat-mech",
    doi = "10.1103/PhysRevA.78.032329",
    journal = "Phys. Rev. A",
    volume = "78",
    pages = "032329",
    year = "2008"
}

@article{Helstrom1969Detection,
    author = "Helstrom, Carl W.",
    title = "{Quantum detection and estimation theory}",
    journal = "J. Stat. Phys.",
    volume = "1",
    pages = "231--252",
    doi = "10.1007/BF01007479",
    year = "1969"
}

@article{AudenaertQuantumChernoff2007,
    author = "Audenaert, K. M. R. and Calsamiglia, J. and Munoz-Tapia, R. and Bagan, E. and Masanes, Ll. and Acin, A. and Verstraete, F.",
    title = "{Discriminating States: The Quantum Chernoff Bound}",
    eprint = "quant-ph/0610027",
    archivePrefix = "arXiv",
    primaryClass = "quant-ph",
    doi = "10.1103/PhysRevLett.98.160501",
    journal = "Phys. Rev. Lett.",
    volume = "98",
    pages = "160501",
    year = "2007"
}

@book{KhatriWilde2024Principles,
    author = "Khatri, Sumeet and Wilde, Mark M.",
    title = "{Principles of Quantum Communication Theory: A Modern Approach}",
    eprint = "2011.04672",
    archivePrefix = "arXiv",
    primaryClass = "quant-ph",
    year = "2024"
}

@book{Bhatia1997Matrix,
    author = "Bhatia, Rajendra",
    title = "{Matrix Analysis}",
    series = "Graduate Texts in Mathematics",
    volume = "169",
    publisher = "Springer",
    address = "New York",
    year = "1997",
    doi = "10.1007/978-1-4612-0653-8"
}

@article{NuradhaWilde2023Fidelity,
    author = "Nuradha, Theshani and Wilde, Mark M.",
    title = "{Fidelity-Based Smooth Min-Relative Entropy: Properties and Applications}",
    eprint = "2305.05859",
    archivePrefix = "arXiv",
    primaryClass = "quant-ph",
    journal = "IEEE Trans. Inf. Theory",
    year = "2024"
}

@article{SakaiTan2020Smooth,
    author = "Sakai, Yuta and Tan, Vincent Y. F.",
    title = "{Asymptotic Expansions of Smooth R\'enyi Entropies and Their Applications}",
    eprint = "2003.05545",
    archivePrefix = "arXiv",
    primaryClass = "cs.IT",
    journal = "IEEE Trans. Inf. Theory",
    year = "2021"
}

@article{Cohen2021DimensionFree,
    author = "Cohen, Doron and Kontorovich, Aryeh and Koolyk, Aaron and Wolfer, Geoffrey",
    title = "{Dimension-Free Empirical Entropy Estimation}",
    eprint = "2105.07408",
    archivePrefix = "arXiv",
    primaryClass = "cs.IT",
    journal = "Adv. Neural Inf. Process. Syst.",
    volume = "34",
    year = "2021"
}

@inproceedings{Cantelli1928Confini,
    author = "Cantelli, Francesco Paolo",
    title = "{Sui confini della probabilit\`a}",
    booktitle = "{Atti del Congresso Internazionale dei Matematici}",
    volume = "6",
    pages = "47--59",
    address = "Bologna",
    year = "1928"
}

@book{Boucheron2013Concentration,
    author = "Boucheron, St\'ephane and Lugosi, G\'abor and Massart, Pascal",
    title = "{Concentration Inequalities: A Nonasymptotic Theory of Independence}",
    publisher = "Oxford University Press",
    year = "2013",
    doi = "10.1093/acprof:oso/9780199535255.001.0001"
}

\end{document}